\newcommand{\be}{\begin{equation}}
\newcommand{\ee}{\end{equation}}
\newcommand{\bra}[1]{\left\langle #1 \right|}
\newcommand{\ket}[1]{\left|#1\right\rangle}
\newcommand{\braket}[2]{\left\langle#1 |  #2\right\rangle}
\newcommand{\1}{\mathbbm{1}}
\newcommand{\cH}{\mathcal{H}}
\newcommand{\cC}{\mathcal{C}}
\newcommand{\cR}{\mathcal{R}}
\newcommand{\cB}{\mathcal{B}}
\newcommand{\tr}{{\rm tr}}
\newtheorem{lem}{Lemma}
\newtheorem{thm}{Theorem}
\newtheorem{defi}{Definition}
\newtheorem{corollary}{Corollary}
\title{Entanglement renormalization, quantum error correction, and bulk causality}
\author[a]{Isaac H. Kim}
\affiliation[a]{IBM T. J. Watson Research Center, Yorktown Heights, New York, USA}
\author[b]{Michael J. Kastoryano}
\affiliation[b]{NBIA, Niels Bohr Institute, University of Copenhagen, Denmark}
\emailAdd{ikim@us.ibm.com}
\emailAdd{kastorya@nbi.ku.dk}
\date{\today}
\abstract{
	Entanglement renormalization can be viewed as an encoding circuit for a family of approximate quantum error correcting codes.
	The logical information becomes progressively more well-protected against erasure errors at larger length scales.
	In particular, an approximate variant of holographic quantum error correcting code emerges at low energy for critical systems. This implies that
	two operators that are largely separated in scales behave as if they are spatially separated operators, in the sense that they obey a Lieb-Robinson type
	locality bound under a time evolution generated by a local Hamiltonian.}
\begin{document}
      \maketitle
\flushbottom

\section{Introduction}
In most physical theories, the notion of locality is imposed, as opposed to being derived from more elementary principles. The AdS/CFT correspondence
indicates that this picture may need to be amended, at least for studying the quantum theory of gravity \cite{Maldacena1997,Witten1998,Aharony1999}. An interpretation
of the duality in the language of the quantum error correcting codes\cite{Almheiri2014}, and the proposal that spacetime may be built out of entanglement \cite{Raamsdonk2010}, suggests
a fruitful avenue along which we can study these questions in the language of quantum information theory.

There has been a recent surge of activity devoted to constructing holographic quantum error correcting codes\cite{Pastawski2015,Yang2015,Hayden2016,Donnelly2016}. These are families of codes which can be formally
expressed as an encoding map from the bulk theory to the boundary theory or vice versa. While the details behind these codes vary, they share
 a number of interesting properties. The operators in the bulk can be mapped to operators on the boundary which obeys certain quantum error correction
 properties outlined in Ref.\cite{Almheiri2014}. They can also reproduce, to some extent, the celebrated Ryu-Takayanagi formula \cite{Ryu2006}.

However, several important issues remain unresolved. Most importantly, these codes are constructed from scratch, as opposed to being derived from a set of well-motivated
assumptions. If we believe in the unitary equivalence of CFT and the quantum theory of gravity in AdS space, we should be able to explain how such codes
emerge from the properties of the CFT. Second, the question of dynamics remains open. Modulo one exception \cite{Yang2015}, these codes are formally maps from the bulk to the boundary that
are injective but not surjective. Therefore, acting a Hamiltonian on the code state will generically produce a state that is outside the code subspace. Furthermore, the boundary Hamiltonian, even if it is local, becomes generically non-local once it is mapped to an operator in the bulk. Resolving
what it means to have  causal bulk dynamics in the presence of these complications is clearly a nontrivial problem.

The purpose of this paper is to make progress on these important issues.
First, we show that an approximate version of holographic quantum error correcting code emerges
at low energy at criticality at scales large compared to the AdS radius, if the ground state can be well-approximated by a certain multi-scale entanglement renormalization ansatz (MERA) \cite{Vidal2007} for which correlations
decay polynomially with distance. Empirical evidences suggest that this is likely to be true for quantum spin systems at criticality \cite{Pfeifer2009}. If that is indeed
true, our work implies that certain variants of holographic quantum error correcting codes naturally emerge in these systems. We also derive fundamental bounds on the error correcting capabilities of these codes.
As for the dynamics, we derive a Lieb-Robinson
type locality bound \cite{Lieb1972} between two observables that are largely separated in scale. It is important to note that these observables generally do not even
commute with each other. Despite this fact, they behave as if they were spatially separated operators undergoing a dynamics generated by a
local Hamiltonian. In some sense, the causal dynamics in the bulk emerges from the universal structure of entanglement at low energy.

Our work supports the proposals to interpret MERA as a discrete analogue of the AdS/CFT correspondence\cite{Swingle2012,Swingle2012a,Qi2013,Miyaji2015}, at least at scales large compared to the AdS radius.
In order to be able to accommodate locality at sub-AdS scale, one would need to incorporate more fine-grained structures. In its present form, our conclusion is so general
that it is even applicable to free-fermion systems, which is unlikely to admit a semiclassical gravitational dual \cite{Castro2012}.

It has been known that
tensor networks such as MERA lead to constructions of various quantum error correcting codes \cite{Ferris2014}. What is interesting is that, as suggested by Pastawski et al. \cite{Pastawski2016},
these codes naturally appear at low energies of critical systems. These codes differ greatly from the so called topological codes\cite{Kitaev1997} in that (i) erasures of bounded regions can be
corrected up to a polynomially small error, rather than exponentially small error and that (ii) one in fact has a family of codes that are related to the geometric data of the
hyperbolic space. Our work provides a concrete framework and technical tools from which the structure of these codes can be studied.

The results presented here rely on a very general property of entanglement renormalization and on recent insights from the theory of approximate quantum
error correction (AQEC) \cite{Flammia2016}. It is particularly illuminating to use entanglement renormalization in the ``Heisenberg picture," wherein the renormalization group (RG)
flow acts on the space of observables. This is an observation already made in the literature \cite{Evenbly2007,Giovannetti2008}, which we generalized substantially in this paper.
The only property that we use is the fact that this RG flow (i) preserves locality and that (ii) it is norm-nonincreasing.
Both of these properties are manifestly true for various proposed forms of entanglement renormalization, but they are not the only possibilities. While we restrict
ourselves to one-dimensional systems for concreteness, it should be clear that these two properties are sufficient to guarantee a similar conclusion in more
generalized settings, e.g., higher dimensions and different spacetime geometry. The insight that we bring from AQEC is a duality relation between decoupling and recoverability; the degree to which  quantum
information can be recovered from a given region is exactly equal to the degree to which certain regions are completely decoupled from each other\cite{Kretschmann2008,Beny2010,Flammia2016}.

In Section \ref{section:entanglement_renormalization}, we review basic facts about entanglement renormalization and derive several identities that form the basis of our analysis.
In Section \ref{section:qec_holography}, we sketch the relation between entanglement renormalization and error correction in the context of holography. We then derive fundamental
properties of error correcting codes that emerge from entanglement renormalization in Section \ref{section:mera_qec}. In Section \ref{section:applications}, we use these properties to constrain
the support of the logical operator and derive a Lieb-Robinson type locality bound between two bulk observables.

\section{Entanglement Renormalization\label{section:entanglement_renormalization}}
Entanglement renormalization was introduced by Vidal \cite{Vidal2007,Vidal2008} to numerically study the critical behavior of one-dimensional quantum many-body
systems. Generalizations to higher dimensions are known \cite{Evenbly2009}. We review the basic notions underlying these constructions, and review several facts that
are pertinent to this paper. MERA is a many-body quantum state that is created by applying a  quantum circuit to a simple product state, say $\ket{0}^{\otimes N}$,
where $N$ is the number of qubits.

There are two important properties that underlie this circuit, and these will form the basis of our argument. First, the circuit is hierarchical. It can be decomposed
into a sequence of isometries, which are labeled in terms of a parameter ($s$) that ranges from $0$ to $O(\log N)$. These isometries will play an important role; we
denote them  $W_s$. It should be noted that the isometry $W_s$ maps vectors of the Hilbert space at ``scale" $s$ to
the Hilbert space at scale $s-1$. These Hilbert spaces are denoted  $\mathcal{H}_s$. In particular, $\mathcal{H}_0$ is the physical Hilbert space. Second, at every level $s$, the isometry $W_s$ preserves locality.  Applying the dual of these isometries to a local operator results in another local operator; that is,
the support of $W_s O_s W_s^{\dagger}$ can only be larger than the support of $O_s$ by a constant amount, where $O_s$ is acting on $\cH_s$.

In the original construction of Ref. \cite{Vidal2007}, $W_s$ is the composition of a global product of isometries and disentanglers at scale $s$: $W_s:= \otimes_{x_s} V_{x_s}\otimes_{y_s}U_{y_s}$, where $x_s$ and $y_s$ are an index of the position along the chain at level $s$. In Figures \ref{fig:MERA1} and \ref{fig:MERA2}, for convenience, we will illustrate binary MERA constructions with uniform isometries $V_{x_s}=V$ and unitaries $U_{x_s}=U$,  however our results hold for the more general construction above.

While MERA is usually a single state, we will instead consider a family of subspaces, $\cC_s$. These subspaces are defined in terms of the isometries from $\mathcal{H}_s$ to $\mathcal{H}_0$: $\cC_s=\{ W_1 \cdots W_s\ket{\varphi_s} | \ket{\varphi_s}\in\cH_s\}$.
In Fig. \ref{fig:MERA1}, we have illustrated $\cC_5$ for $W_s:= \otimes_{x_s} V_{x_s}\otimes_{y_s}U_{y_s}$. It should be clear that for any finite MERA construction there exists an $s_{\rm max}=O(\log(N))$ such that $\cC_s$ is trivial for all $s\geq {s_{\rm max}}$.

\begin{figure*}
	\begin{center}
		\includegraphics[scale=0.30]{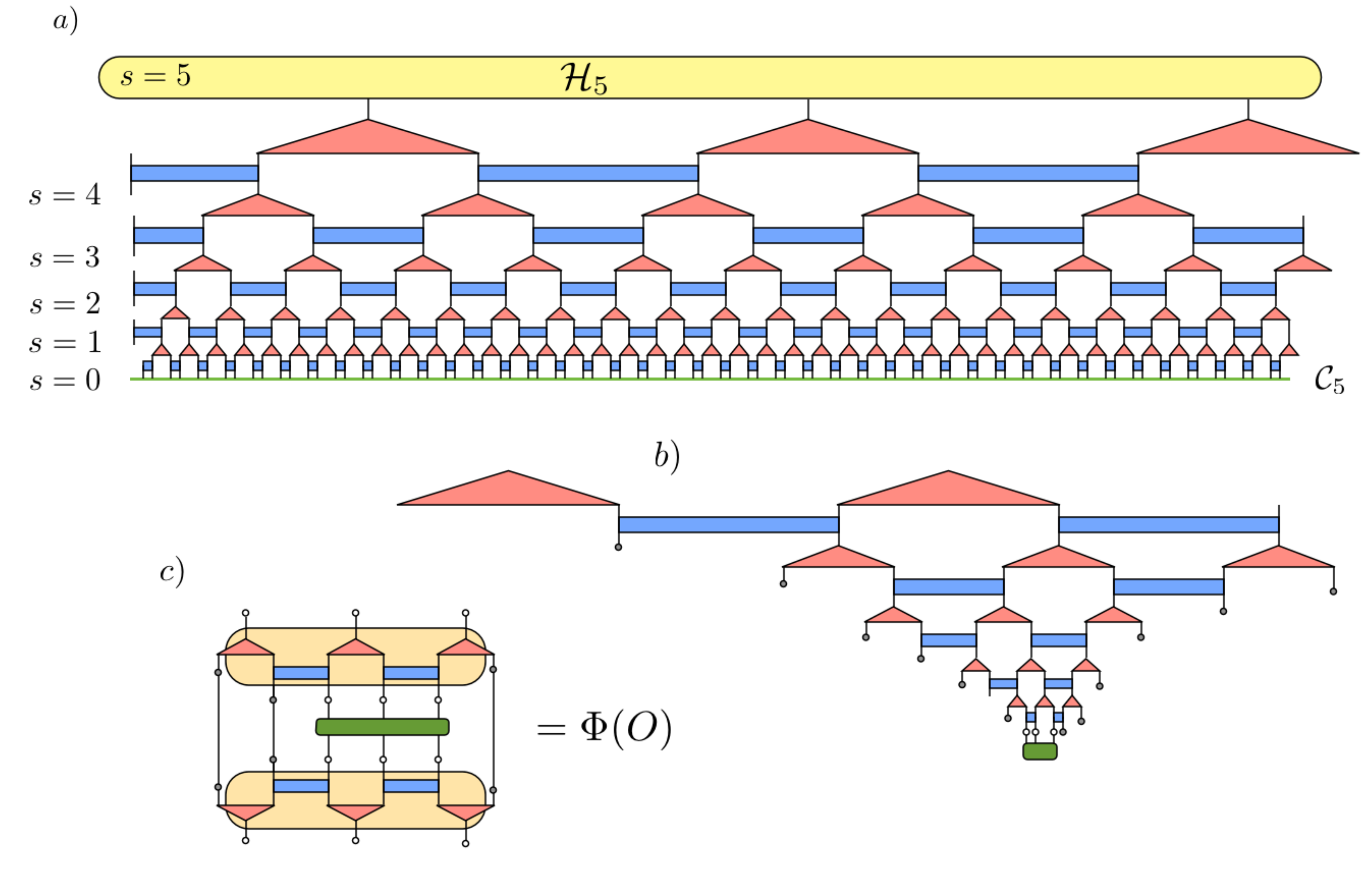}
	\caption{
       We illustrate a scale invariant MERA construction. The blue rectangles are the disentangling unitaries $\{U_{x_s}\}$, and the red triangles are the isometries $\{V_{x_s}\}$. The green blocks represent local observables. a) A MERA network up to level $s=5$ with the isometries $W_s:= \otimes_{x_s} V_{x_s}\otimes_{y_s}U_{y_s}$. The yellow box at the top of the figure represents the Hilbert space $\cH_5$ and can be considered as the bare logical vectors. The MERA circuit serves as an encoding map from the logical space $\cH_s$ onto the physical one $\cC_s$. b) The past causal cone of a local observable in the physical space (the boundary). Locality (3-adjacent) of the observable is preserved at all levels of the network. c) The transfer operator $\Phi(\cdot)$ acting on an elementary block.
		}\label{fig:MERA1}
			\end{center}
\end{figure*}

For further analysis, it will be convenient to work in terms of a certain family of purified states. Consider a state $\rho$ acting on $\mathcal{H}_s$. We would like
to consider a family of states that are (i) first purified and (ii) then mapped into the Hilbert space $\mathcal{H}_{s'}$($s'<s$) by applying an isometry $W_{s'}W_{s'+1}\cdots W_s$.
In concrete terms, such a state is expressed as follows:
\begin{equation}
	\ket{\rho_{s'}}:=W_{s'}W_{s'+1}\cdots W_s(\rho_s^{1/2}\otimes U_{R_s})\ket{\Omega_s},
\end{equation}
where $\ket{\Omega_s}$ is a maximally entangled state between $\mathcal{H}_s$ and a copy of $\cH_s$ which we call $\cH_{R_s}$, and $U_{R_s}$ is a unitary operator acting on $\cH_{R_s}$. In particular, $\ket{\rho_s}$
is  a purification of $\cC_s$: $\tr_{R_s}[\ket{\rho_s}\bra{\rho_s}]=\rho_s$.


\subsection{Renormalization in the Heisenberg Picture}
The MERA formalism is especially well suited to studying expectation values of local observables. More generally, we will need to  consider objects of the form:
\begin{equation}
	\bra{\rho_s} O_s \ket{\sigma_s},\label{eq:mera_object}
\end{equation}
where $O_s$ is some operator that is supported on $\cH_s\otimes\cH_{R_s}$ and $\ket{\rho_s}, \ket{\sigma_s}$ are purifications of $\rho_s,\sigma_s\in \cC_s$, and  $O_s$ will often have some additional locality structure on $\cH_s$. The reason for considering such objects will become evident once we explain its relation to
quantum error correction in Section III. For the moment though, it will be important to develop the machinery for their analysis.

For that purpose, it will be convenient to recast this object in an alternative form, which can be thought as the ``Heisenberg picture" of entanglement renormalization.
Let us first note the following identity:
\begin{equation}
	\bra{\rho_s} O_s \ket{\sigma_s} = \bra{\rho_{s+1}} \Phi_s^{s+1}(O_s)\ket{\sigma_{s+1}},
\end{equation}
where $\Phi_s^{s+1}(\cdot) = W_{s+1}(\cdot) W_{s+1}^{\dagger}$. This map is completely-positive, trace-preserving (CPTP) and unital. Such maps are often referred to as (unital) quantum channels.  In particular, it is norm-nonincreasing
and maps the identity operator to the identity operator. We shall refer to the process of applying $\Phi_s^{s+1}$ to $O_s$ as the process of coarse-graining (renormalizing)  the operator from scale
$s$ to $s+1$. More generally, we will consider the map:
\begin{equation}
	\Phi_s^{s'} := \Phi_{s'-1}^{s'}\circ \cdots \circ \Phi_s^{s+1},
\end{equation}
which corresponds to the process of renormalizing an operator from scale $s$ to $s'$, where $s'>s$. It is clear that $\Phi_s^{s'}$ maps operators on $\cH_{s}$ to operators on $\cH_{s'}$.

Under the renormalization procedure, the evolution of the operator can be broken down into two stages. In the first stage, the support of the operator shrinks monotonously, at a constant rate: if $A_s$ is a simply connected region at level $s$, then an operator $O_{A_s}$ supported on $A_s$ gets mapped to an operator $O_{A_{s+1}}\equiv \Phi_s^{s+1}(O_{A_s})$, where $|A_{s+1}|\leq c|A_s|$ for some constant $c>1$. In the (binary) MERA network illustrated Fig. \ref{fig:MERA1}, the constant $c$ is $2$.
The set of the supports over different scales, $\{A_s,A_{s+1},\cdots, A_{s'-1},A_{s'} \}$, is said to be the \emph{past causal cone of $A$} from $s$ to $s'$. When the range is obvious from the
context, we shall simply say past causal cone, without specifying the range.

In other words, as an operator is renormalized from one scale to another, its support $(\ell)$ shrinks exponentially with the scale separation. After $O(\log \ell)$ renormalization steps,  the support size becomes $O(1)$, and the second stage begins. What distinguishes the second stage from the first is the fact that the support of the operator remains
 constant under further coarse graining. The minimal nontrivial regions which can support such operators are referred to as the \emph{elementary blocks} of the MERA network (see Fig. \ref{fig:MERA1}(b-c)).


The aforementioned behavior of renormalized operators is, qualitatively speaking, independent of the details of the MERA network. That is, the conclusion remains intact even if the shape of the
network differs at different scale or even if there is a spatial anisotropy. However, accommodating those generalizations will necessitate unnecessary complications. This is why we shall consider
MERA networks that are  \emph{scale-invariant}, which we define below.

\begin{defi}
	A MERA network is scale invariant if there exists an isometry $V$ and a unitary $U$ such that $V_{x_s} = V$ $\forall x_s,s$ and $U_{y_s}=U$ $\forall y_s, s$.
\end{defi}

$\Phi_{s}^{s'}$ is a quantum channel that maps operators on $\cH_s$ to operators on $\cH_{s'}$, which will typically be different spaces. However, if the MERA network is scale invariant, when $\Phi_s^{s'}$ acts on an observable in an elementary block of $s$, it gets mapped to an observable in an elementary block in $s+1$. The unique channel $\Phi$ mapping operators between elementary blocks of $s$ and $s+1$ can be represented as one with identical input and output space (see Fig. \ref{fig:MERA1}a). This is extremely convenient as it allows us to map a ``trail'' of elementary blocks up the MERA network as the iteration of quantum channels (Fig.\ref{fig:MERA1}b). If the network is scale invariant, as is expected for critical systems, the dynamics between elementary blocks is governed by stationarity and mixing properties of the channel $\Phi$, which will also be referred to as the \textit{transfer operator}.

Generic quantum channels (see the appendix for a discussion) that have the same input and output algebra can be written as
\begin{equation}\label{eqn:channel}
\Phi(O)=\sum_k \lambda_k \tr[OR_k]L_k,
\end{equation}
where $\lambda_k$ are the eigenvalues of $\Phi$, and $L_k,R_k$ are the bi-orthonormal left and right eigenvectors: $\tr[L_k,R_l]=\delta_{kl}$. The spectrum of the channel is bounded by one ($|\lambda_k|\leq 1$), and for generic quantum channels, there is only one eigenvalue of magnitude $1$ corresponding to the unique stationary state of the channel (in the Schr\"odinger picture). Arranging the eigenvalues in decreasing order (decreasing real part), we get that $\lambda_0=1$, with  $L_0=\1$ and $R_k=\rho_{\rm ss}$ is a density matrix, which we will refer to as the stationary state for the elementary block. $\lambda_1$  will play an important role in the remainder of the paper. For scale invariant MERA, $\nu:=-\log_2({\rm Re}\lambda_1)$ will be referred to as the \textit{scaling dimension}.

To conclude this section, we formally define the class of channels that we plan to work with:
\begin{defi}
For a scale invariant MERA network defined by the isometries $\{W_s:= \otimes_{x_s} V_{x_s}\otimes_{y_s}U_{y_s}\}$, we say that the class of channels $\Phi_s^{s+1}(\cdot)=W^\dag_s(\cdot)W_s$  is {\rm RG-regular} if its action on elementary blocks can be written as in Eq. (\ref{eqn:channel}) with a scaling dimension $\nu:=-\log_2({\rm Re}[\lambda_1])$ strictly larger than zero.
\end{defi}

If the subspaces $\cC_s$ are related to $\cH_s$ by an RG-regular channel $\Phi_0^s$, we will say that $\cC_s$ are RG-regular subspaces (or \textit{codes} in the error correction language).

\subsection{Calculus for entanglement renormalization}
The action of the renormalization map $\Phi_s^{s'}$ on general, \emph{non-local} operators play an important role. We develop a calculus that facilitates this analysis below. The operators that we consider are, generally speaking, supported on three subsystems, which we denote as $A,A', $and $R$. Here $A$ is a subsystem of the physical Hilbert space($\mathcal{H}_0$), $R$ is the purifying space, and $A'$ is yet another subsystem that is included neither in the physical Hilbert space nor in the purifying space. Let us denote such an operator as $O_{AA'R}$. Simply connected regions of the physical Hilbert space $\cH_0$ will be denoted without a subscript ($A$).

We consider a linear map of the following form:
\begin{equation}
O_{AA'R} \to \bra{\rho_0} O_{AA'R} \ket{\rho_0},
\end{equation}
It is important to note that the output of this map is generally an operator, because $A'$ lies outside of the physical Hilbert space and the purifying space. We see that
\begin{align}
\bra{\rho_0} O_{AA'R} \ket{\rho_0} &= \bra{\rho_s} \Phi_{0}^{s}(O_{AA'R})\ket{\rho_s} \nonumber \\
&= \tr_{A_sR}(\rho_s^{A_sR} \Phi_0^s(O_{AA'R})).\label{eq:renormalization_identity}
 \end{align}
The first line follows from the definition of the state. The second line follows from the locality of the renormalization map; it maps an operator supported on $AA'R$ to an operator supported on $A_sA'R$. We will often write subsystems as superscripts in order to specify the MERA scale $s$ in the subscript.

While this identity in Eq. (\ref{eq:renormalization_identity}) may seem a bit obtuse, it has important implications. First, consider the case in which $A'$ is an empty set. The correlations between $A$ and $R$ for an arbitrary operator have a simple closed-form expression.
\begin{lem}\label{lem:clustering_identity_simple}
For any state $\rho_0\in\cH_0$,
\begin{equation}
\tr[\rho_0^{A} \otimes \rho_0^{R}O_{AR}] = \tr[\rho_s^{A_s}\otimes \rho_s^R\Phi_0^s(O_{AR})]. \label{eq:clustering_identity_simple}
\end{equation}
\end{lem}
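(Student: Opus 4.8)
The plan is to reduce the identity to two elementary structural facts about the composite isometry $W:=W_1\cdots W_s:\cH_s\to\cH_0$: it acts only on the physical factor and leaves the purifying space $R$ as a spectator. The point is that this is genuinely more than the $A'=\emptyset$ case of (\ref{eq:renormalization_identity}): that identity transports the \emph{full} correlator $\tr[\rho_0^{AR}O_{AR}]$, whereas here we are given the \emph{product of marginals} $\rho_0^A\otimes\rho_0^R$, and $A$ and $R$ are in general correlated in $\ket{\rho_0}$. So the real content is that the flow maps a product of marginals to a product of marginals.

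First I would show that the purifying marginal is a fixed point of the flow. Since $\ket{\rho_0}=W\ket{\rho_s}$ with $W$ acting trivially on $R$, tracing out the physical factor and using $W^\dagger W=\1$ together with cyclicity of the partial trace gives $\rho_0^R=\tr_{\cH_0}[\ket{\rho_0}\bra{\rho_0}]=\tr_{\cH_s}[\ket{\rho_s}\bra{\rho_s}]=\rho_s^R$.

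Second I would record the covariance of the physical marginal. For any operator $P$ supported on $A$, the $A'=R=\emptyset$ specialization of (\ref{eq:renormalization_identity}) --- equivalently $\rho_0=W\rho_s W^\dagger$, $\Phi_0^s(P)=W^\dagger P W$, and cyclicity --- yields $\tr[\rho_0 P]=\tr[\rho_s\,\Phi_0^s(P)]$. Because the flow preserves locality, $\Phi_0^s(P)$ is confined to the causal cone $A_s$, so the trace may be taken against the marginals: $\tr_A[\rho_0^A P]=\tr_{A_s}[\rho_s^{A_s}\Phi_0^s(P)]$.

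To conclude, expand $O_{AR}=\sum_k P_k\otimes Q_k$ with $P_k$ on $A$ and $Q_k$ on $R$. Since $\Phi_0^s$ touches only the physical factor, $\Phi_0^s(O_{AR})=\sum_k\Phi_0^s(P_k)\otimes Q_k$, and the product state factorizes each trace, so $\tr[\rho_0^A\otimes\rho_0^R\,O_{AR}]=\sum_k\tr_A[\rho_0^A P_k]\,\tr_R[\rho_0^R Q_k]$. Applying the second fact to the $A$-factors and the first to the $R$-factors turns this into $\sum_k\tr_{A_s}[\rho_s^{A_s}\Phi_0^s(P_k)]\,\tr_R[\rho_s^R Q_k]=\tr[\rho_s^{A_s}\otimes\rho_s^R\,\Phi_0^s(O_{AR})]$, which is the claim. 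The only step requiring care is the bookkeeping of supports: one must check that $R=R_s$ is genuinely untouched by $W$ and that the causal-cone/locality property indeed confines $\Phi_0^s(P_k)$ to $A_s$, so that replacing states by their marginals is legitimate. There is no analytic difficulty beyond this; the content is entirely that the renormalization flow acts as the identity on the purifying marginal while covariantly transporting the physical marginal.
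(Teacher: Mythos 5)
Your proof is correct and takes essentially the same route as the paper's: both rest on the renormalization identity Eq.~(\ref{eq:renormalization_identity}) restricted to the $A$ factor (with $R$ treated as the external system $A'$) together with the invariance $\rho_0^R=\rho_s^R$, which the paper likewise dispatches as ``trivial.'' The only difference is bookkeeping: the paper invokes the operator-valued form of Eq.~(\ref{eq:renormalization_identity}) directly and then traces against $\rho_0^R$, whereas you rederive the same identity by operator Schmidt decomposition of $O_{AR}$ --- incidentally the same device the paper deploys in its proof of Lemma~\ref{lem:clustering_identity_general}.
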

\begin{proof}
First note that $\tr_A[\rho_0^A O_{AR}] = \tr_{A_s}[\rho_s^{A_s} \Phi_0^s(O_{AR})]$ by viewing $R$ as the subsystem $A'$ in Eq.\ref{eq:renormalization_identity}.
Therefore,
\begin{equation}
	\tr_R[\rho_0^R\tr_A[\rho_0^AO_{AR}]] = \tr_R[\rho_0^R\tr_{A_s}[\rho_s^{A_s}\Phi_0^s(O_{AR})]]
\end{equation}
\footnote{Correspondingly, in view of Eq.\ref{eq:renormalization_identity}, $O_{AR}$ should not be viewed as an operator supported on $A$ and the purifying space. It should be instead viewed as an operator supported on $A$ and a subsystem $R$ which is neither in the physical Hilbert space nor in the purifying space.} The claim follows from the trivial identity $\rho_0^R=\rho_s^R$.
\end{proof}

There is in fact a more general identity, which plays a crucial role in our analysis.
\begin{lem}\label{lem:clustering_identity_general}
If $A_{s'} \cap C_{s'} =\emptyset $, $\forall s'\leq s$,
\begin{equation}
\tr[\rho_0^{A} \otimes \rho_0^{CR}O_{ACR}] = \tr[\rho_s^{A_s}\otimes \rho_s^{C_sR}\Phi_0^s(O_{ACR})]. \label{eq:clustering_identity_general}
\end{equation}
\end{lem}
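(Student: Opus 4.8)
The plan is to generalize the proof of Lemma \ref{lem:clustering_identity_simple}, the essential new ingredient being that the hypothesis $A_{s'}\cap C_{s'}=\emptyset$ for all $s'\le s$ lets one factorize the renormalization channel across the two past causal cones. The naive two-step mirror of Lemma \ref{lem:clustering_identity_simple}---trace out $A$ first, then $R$---does not work here, because $C$ now lives in the physical Hilbert space and is therefore coarse-grained by $\Phi_0^s$ along with $A$; the partially contracted objects would then live on incompatible spaces ($C$ at scale $0$ versus $C_s$ at scale $s$). One must instead coarse-grain $A$ and $C$ simultaneously and exploit their disjointness.

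First I would establish the factorization
\[
\Phi_0^s(O_{ACR}) = \big(\Phi^A\otimes\Phi^C\otimes\mathrm{id}_R\big)(O_{ACR}),
\]
where $\Phi^A$ and $\Phi^C$ are the coarse-graining channels built from only those disentanglers and isometries lying in the past causal cone of $A$ and of $C$ respectively, and $R$ (the purifying space) is untouched. The cleanest way to prove this is by induction on a single renormalization step $s'\to s'+1$: because $A_{s'}\cap C_{s'}=\emptyset$ and $A_{s'+1}\cap C_{s'+1}=\emptyset$, every local tensor of $W_{s'+1}$ acts either entirely within the $A$-cone, entirely within the $C$-cone, or entirely outside both, so $W_{s'+1}$ splits as a tensor product over these regions and the single-step channel $\Phi_{s'}^{s'+1}$ factorizes accordingly; composing the steps gives the claim (the tensors outside both cones act on the identity part of $O_{ACR}$ and, by unitality, drop out). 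This factorization is the heart of the argument and the one place where the hypothesis is genuinely used.

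With the factorization in hand, the remainder is bookkeeping. Passing to the Schr\"odinger picture via the adjoint identity $\tr[X\,\Phi(Y)]=\tr[\Phi^\ast(X)\,Y]$, and using that the input state factorizes over the disjoint factors $A_s$ and $C_sR$, gives
\[
\tr[\rho_s^{A_s}\otimes\rho_s^{C_sR}\,\Phi_0^s(O_{ACR})] = \tr\big[\,(\Phi^A)^\ast(\rho_s^{A_s})\otimes\big((\Phi^C)^\ast\otimes\mathrm{id}_R\big)(\rho_s^{C_sR})\;O_{ACR}\big].
\]
Finally I would invoke the standard MERA causal-cone property for reduced density matrices: the reduced state of $\ket{\rho_0}$ on a region is recovered from its coarse-grained reduced state by applying only the isometries in that region's past causal cone, i.e.\ $(\Phi^A)^\ast(\rho_s^{A_s})=\rho_0^A$ and $\big((\Phi^C)^\ast\otimes\mathrm{id}_R\big)(\rho_s^{C_sR})=\rho_0^{CR}$. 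Substituting these yields $\tr[\rho_0^A\otimes\rho_0^{CR}\,O_{ACR}]$, which is the left-hand side.

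The main obstacle, as indicated, is the factorization of $\Phi_0^s$: one has to verify that disjointness of the cones at \emph{every} intermediate scale---not merely at the endpoints---is precisely what forbids any shared disentangler bridging the two cones, and hence guarantees the tensor-product structure of the channel. Once that is secured, the adjoint manipulation and the causal-cone identity for the reduced states are routine and complete the proof.
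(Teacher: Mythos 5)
Your proof is correct, and it hinges on exactly the same key fact as the paper's: disjointness of the past causal cones of $A$ and $C$ at \emph{every} intermediate scale forces the coarse-graining map to factorize across the two cones. The packaging, however, is genuinely different. The paper first performs an operator Schmidt decomposition $O_{ACR}=\sum_j O_{A,j}\otimes O_{CR,j}$, reducing to product operators $O_1\otimes O_2$; for such operators the ``two-step'' contraction you dismiss at the outset actually does work, because $\tr_A[\rho_0^A O_1]$ and $\tr[\rho_0^{CR}O_2]$ are scalars, each rewritable at scale $s$ via the Heisenberg-picture identity of Eq.~(\ref{eq:renormalization_identity}), so no incompatible-space issue arises; the hypothesis is then invoked only at the very last step, to assert $\Phi_0^s(O_1)\otimes\Phi_0^s(O_2)=\Phi_0^s(O_1\otimes O_2)$. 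You instead establish that factorization once and for all at the level of channels, $\Phi_0^s=\Phi^A\otimes\Phi^C\otimes\mathrm{id}_R$ on operators supported in $ACR$, by induction over renormalization steps, and then finish in the Schr\"odinger picture via adjoints and the descending-superoperator identities $(\Phi^A)^\ast(\rho_s^{A_s})=\rho_0^A$ and $((\Phi^C)^\ast\otimes\mathrm{id}_R)(\rho_s^{C_sR})=\rho_0^{CR}$. The trade-off: the paper's route is shorter given that Eq.~(\ref{eq:renormalization_identity}) is already in hand, while yours avoids the Schmidt decomposition, treats general operators directly, and supplies an explicit inductive justification of the causal-cone factorization --- precisely the step the paper asserts without proof. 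One small sharpening of your induction: the correct dichotomy is that no single tensor of $W_{s'+1}$ can intersect \emph{both} cones (a tensor may well straddle the boundary of one cone and the exterior, in which case it is absorbed into that cone's channel, enlarging its support); this is what disjointness at consecutive scales rules out, and your unitality remark then disposes of the tensors outside both cones. In short, yours is a valid Schr\"odinger-picture dual of the paper's Heisenberg-picture argument.
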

\begin{proof}
Consider an operator Schmidt decomposition of $O_{ACR}$:
\begin{equation}
O_{ACR} = \sum_j O_{A,j}\otimes O_{CR,j},
\end{equation}
where $O_{A,j}$ is an operator supported on $A$ and $O_{CR,j}$ is an operator supported on $CR$. Because any operator admits such a decomposition, it suffices to prove the statement for an operator of a tensor product form between $A$ and $CR$. Without loss of generality, consider an operator $O=O_1\otimes O_2$, where $O_1$ is supported on $A$ and $O_2$ is supported on $CR$.

\begin{align}
\tr_A[\rho_0^A O_1 \otimes O_2] &= \tr_A[\rho_0^A O_1] O_2 \nonumber\\
&= \tr_{A_s}[\rho_s^{A_s} \Phi_0^s(O_1)] O_2. \nonumber
\end{align}
Note that $\tr_{A_s}[\rho_s^{A_s}\Phi_0^s(O_1)]O_2$ is an operator supported on $CR$, as the term appearing before $O_2$ is a scalar. By using the fact that
$\tr[\rho_0^{CR} O_2] = \tr[\rho_s^{C_sR}\Phi_0^s(O_2)]$,
\begin{align}
\tr[\rho_0^A \otimes \rho_0^{CR} ] = \tr[\rho_s^{A_s}\otimes \rho_s^{C_sR}O_1'\otimes O_2'],
\end{align}
where $O_1' = \Phi_0^s(O_1)$ and $O_2' = \Phi_0^s(O_2)$. Since we assumed that $A_{s'} \cap C_{s'} =\emptyset $ $\forall s'\leq s$, the past causal cone of $A$ and $C$ never overlap with each other in this
range. Therefore, $O_1' \otimes O_2' = \Phi_0^s(O_1\otimes O_2)$. This completes the proof.
\end{proof}

\section{Quantum Error Correction and Holography\label{section:qec_holography}}
Recently, various quantum error correcting codes were proposed as models of holography \cite{Pastawski2015,Yang2015,Hayden2016,Donnelly2016}. These codes are equipped with a family of logical
operators that are labeled by the coordinates in the bulk. The radial coordinate, which in our setup corresponds to the scale($s$), is particularly interesting in the context of quantum error
correction. The logical information becomes progressively more well-protected against erasures of boundary subsystems as it recedes further into the bulk.

We will show that such codes naturally arise from the MERA construction. Our choice of logical operators follow the choice of bulk local operators defined in Ref. \cite{Qi2013,Miyaji2015}. In
our notation, the logical operators at scale $s$ will have the form of $W_s\cdots W_1 O W_1^{\dagger}\cdots W_s^{\dagger}$, where $O\in \mathcal{B}(\mathcal{H}_0)$. We show that, as in
the existing proposals \cite{Pastawski2015,Yang2015,Hayden2016,Donnelly2016}, these operators are more well-protected as $s$ increases. We also derive several fundamental properties of these codes.

How are these results at all related to the discussion in Section II? The answer lies on a well-known duality relation between two different concepts, which is perhaps one of the most
fundamental insights behind quantum error correction.  Erasure of a certain region is correctable if and
only if the region contains no logical information \cite{Kretschmann2008,Beny2010}. In slightly more technically terms, an erasure is correctable if and only if the region is uncorrelated
with the purifying space for all the codewords. This equivalence relation implies that it suffices to bound the correlations between the purifying space and a subsystem of interest. This is why
we considered objects of the form of Eq.\ref{eq:mera_object} in Section II.

It turns out, however, that much more can be learned about the structure of these codes by introducing a more refined notion of correctability. It is the notion of \emph{local correctability}
which was introduced in Ref. \cite{Flammia2016} and used in the context of holography in Ref. \cite{Pastawski2016a}. As in Refs. \cite{Kretschmann2008,Beny2010}, there is a similar duality relation between local correctability and the degree to which different subsystems are uncorrelated from each other. In words, erasure of a region $A$
is locally correctable from a recovery operation on $AB$ if and only if $ACR$ is decoupled into $A$ and $CR$, where $C$ is the complementary region of $AB$ and $R$ is the purifying space.
It should be clear that this subsumes the less general case of $B$ being an empty set, which corresponds to Refs. \cite{Kretschmann2008,Beny2010}. Specifically, this result is encapsulated in
Theorem \ref{thm:decoupling}

\begin{thm}\label{thm:decoupling}
	\cite{Flammia2016} Consider a code $\mathcal{C}$ whose underlying Hilbert space can be decomposed into a tensor product of $A,B,$ and $C$. Let $R$ be the purifying space of $\mathcal{C}$.
	Then the following two objects are equal:
	\begin{equation}
		\min_{\omega^A} \sup_{\rho^{ABCR}} \mathfrak{B}(\omega^A \otimes \rho^{CR},\rho^{ACR})\label{eq:decoupling}
	\end{equation}
	\begin{equation}
		\inf_{\mathcal{R}_B^{AB}}\sup_{\rho^{ABCR}}\mathfrak{B}(\mathcal{R}_B^{AB}(\rho^{BCR}), \rho^{ABCR}),\label{eq:correctability}
	\end{equation}
	where $\inf$ is over all CPTP maps from $\mathcal{B}(\mathcal{H}_B)$ to $\mathcal{B}(\mathcal{H}_{AB})$ and $\mathfrak{B}(\cdot, \cdot)$ is the Bures distance.
\end{thm}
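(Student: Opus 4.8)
The plan is to pass from the Bures distance to the fidelity, apply Uhlmann's theorem to both expressions, and then exhibit an explicit, codeword-independent correspondence between decouplers $\omega^A$ and recovery maps $\mathcal{R}_B^{AB}$ under which the two purification overlaps coincide. First I would use that the Bures distance is a monotonically decreasing function of the fidelity $F(\rho,\sigma)=\tr\sqrt{\sqrt{\rho}\,\sigma\,\sqrt{\rho}}$, via $\mathfrak{B}(\rho,\sigma)=\sqrt{2(1-F(\rho,\sigma))}$. Minimizing a worst-case Bures distance is then the same as maximizing a worst-case fidelity, so the claimed identity reduces to
\begin{equation}
\max_{\omega^A}\inf_{\rho}F(\omega^A\otimes\rho^{CR},\rho^{ACR})=\sup_{\mathcal{R}_B^{AB}}\inf_{\rho}F(\mathcal{R}_B^{AB}(\rho^{BCR}),\rho^{ABCR}),
\end{equation}
where $\rho\equiv\rho^{ABCR}$ runs over the codewords (purified as $\ket{\rho}_{ABCR}$) and all marginals are those of $\ket{\rho}\bra{\rho}$.

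The heart of the proof is a pointwise-in-$\rho$ identification of the two inner optimizations. On the decoupling side, Uhlmann's theorem gives $F(\omega^A\otimes\rho^{CR},\rho^{ACR})=\max_{\ket{\Phi}}\abs{\braket{\rho}{\Phi}}$, the maximum ranging over purifications $\ket{\Phi}_{ABCR}$ of $\omega^A\otimes\rho^{CR}$ that use $B$ (enlarged if needed) as purifying register. Since the marginal on $ACR$ is a product across the cut $A\,|\,CR$, any such purification factorizes, up to an isometry on the purifying register, as $\ket{\Phi}=\ket{\alpha}_{AB_1}\otimes\ket{\beta}_{CRB_2}$ with $\tr_{B_1}\ket{\alpha}\bra{\alpha}=\omega^A$ and $\tr_{B_2}\ket{\beta}\bra{\beta}=\rho^{CR}$; as $\ket{\beta}_{CRB_2}$ and $\ket{\rho}_{ABCR}$ both purify $\rho^{CR}$, they are related by an isometry $T\colon B_2\to AB$. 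On the recovery side, $\rho^{ABCR}=\ket{\rho}\bra{\rho}$ is pure, so $F(\mathcal{R}_B^{AB}(\rho^{BCR}),\rho^{ABCR})=\sqrt{\bra{\rho}\mathcal{R}_B^{AB}(\rho^{BCR})\ket{\rho}}$; dilating $\mathcal{R}_B^{AB}$ by a Stinespring isometry $V\colon\cH_B\to\cH_A\otimes\cH_B\otimes\cH_E$ and applying Uhlmann once more writes this, too, as a maximal overlap of the same form, with $V$ acting on the $B$-leg of $\ket{\rho}$ and a free purifying vector on the ancilla.

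These two families coincide. The data $(\omega^A,T)$ read off from a decoupler assemble into a dilation of a channel that regenerates $A$ from $B$ while leaving $C$ and $R$ untouched, and conversely the dilation $V$ of any recovery produces a product purification of some $\omega^A\otimes\rho^{CR}$; one checks that under this correspondence the two overlaps agree, giving $\max_{\omega^A}F=\sup_{\mathcal{R}_B^{AB}}F=:g(\rho)$ for each fixed codeword. The intuition is transparent: $A$ being decoupled from $CR$ means that all of $A$'s correlations reside in $B$, so $B$ already carries the information needed to rebuild $A$ — exactly local recoverability by a map $B\to AB$ — while conversely recoverability from $B$ alone forces $A$ to be uncorrelated with $CR$.

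The main obstacle is that the theorem fixes $\omega^A$ (respectively $\mathcal{R}_B^{AB}$) \emph{before} the worst case over codewords is taken, so the pointwise identity does not immediately close the argument: the naive interchange $\max_{\omega^A}\inf_{\rho}=\inf_{\rho}\max_{\omega^A}$ is unavailable, since the fidelity is concave in both arguments and Sion's minimax theorem does not apply. I would instead prove the two inequalities directly, by making the correspondence above uniform in $\rho$: the isometry $T$ (equivalently the dilation $V$, equivalently $\mathcal{R}_B^{AB}$) is dictated by the code and the chosen cut rather than by the individual codeword — the same feature that lets one recovery correct every codeword in ordinary quantum error correction — so that a single $\omega^A$ and a single $\mathcal{R}_B^{AB}$ can be produced together with $F(\omega^A\otimes\rho^{CR},\rho^{ACR})=F(\mathcal{R}_B^{AB}(\rho^{BCR}),\rho^{ABCR})$ for all $\rho$ at once, whereupon $\inf_{\rho}$ yields both inequalities. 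Establishing this uniformity — that the Uhlmann isometries can be selected consistently across all codewords, together with the dimension bookkeeping for the Stinespring ancilla $\cH_E$ — is the delicate step; concretely I would anchor the construction at the code's canonical purification, with $R$ maximally entangled with the logical space, for which one isometry governs all codewords simultaneously. The reduction to fidelity and the individual invocations of Uhlmann's theorem are, by contrast, routine.
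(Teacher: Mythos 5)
Before anything else, note that the paper does not prove this statement: the citation \cite{Flammia2016} embedded in the theorem marks it as imported wholesale from that reference, and the surrounding text only explains how it is used (e.g.\ in the corollary to Lemma 3). Your attempt can therefore only be measured against the argument in that reference, which runs through the channel-level information--disturbance duality (``continuity of the Stinespring dilation'') of Refs.~\cite{Kretschmann2008,Beny2010}. Your skeleton agrees with that route in its routine parts --- converting Bures distance to fidelity via $\mathfrak{B}=\sqrt{2(1-F)}$, Uhlmann on the decoupling side, purity of $\rho^{ABCR}$ plus a Stinespring dilation on the recovery side, and the correspondence between decouplers $(\omega^A,T)$ and recovery dilations $V$ --- and you correctly isolate the crux: since $\min_{\omega^A}$ and $\inf_{\mathcal{R}_B^{AB}}$ stand \emph{outside} $\sup_{\rho}$, a codeword-by-codeword Uhlmann correspondence proves nothing by itself; one must exhibit a single $\omega^A$ and a single $\mathcal{R}_B^{AB}$ that are good for all codewords simultaneously.

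The gap is that your mechanism for this uniformity fails. Anchoring the Uhlmann isometry at the canonical purification $\ket{\tau}$ (with $R$ maximally entangled with the logical space) controls only the error on that one state, i.e.\ an entanglement-fidelity statement. Every other purified codeword has the form $\ket{\rho}\propto(\1\otimes M_R)\ket{\tau}$ with $M_R$ acting on $R$ alone, and fidelity (equivalently Bures distance) is \emph{not} monotone under such renormalized post-selection: the error on an individual codeword can exceed the error on the anchor by factors growing with the code dimension --- this is exactly the familiar gap between entanglement fidelity and worst-case fidelity, and it is the reason the theorem is nontrivial. Hence the pair built from the anchor gives no bound on $\sup_\rho$. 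Moreover, the property you assert for it --- pointwise equality $F(\omega^A\otimes\rho^{CR},\rho^{ACR})=F(\mathcal{R}_B^{AB}(\rho^{BCR}),\rho^{ABCR})$ for \emph{every} codeword --- is stronger than the theorem (which equates optimized worst cases only) and is not a generic feature of the correspondence: for the three-qubit repetition code with $A,B,C$ the three qubits, the recovery that copies $B$ onto $A$ attains the optimal worst-case fidelity $1/\sqrt{2}$ and is \emph{exact} on the codeword $\ket{000}$, whereas any worst-case-optimal $\omega^A$ has $F(\omega^A\otimes\rho^{CR},\rho^{ACR})=\sqrt{\bra{0}\omega^A\ket{0}}<1$ on that same codeword. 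What actually closes the argument in \cite{Flammia2016,Kretschmann2008,Beny2010} is an operator-level minimax (semidefinite-programming duality) statement formulated for channels rather than individual states: because the Uhlmann overlap is affine in the operation acting on the Stinespring environment, the optimization over that operation can legitimately be exchanged with the worst case over purified inputs, and it is this exchange --- not a choice of anchor codeword --- that produces one recovery map uniformly good for all codewords. Having explicitly set minimax arguments aside, your proof establishes only the fixed-$\rho$ identity, and therefore neither of the two inequalities in the theorem.
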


A few remarks are in order. First, the Bures distance is a distance measure that can be easily related to a more familiar one, the trace norm:
\begin{equation}
	2\mathfrak{B}^2(\rho,\sigma)\leq \|\rho-\sigma \|_1 \leq 2\sqrt{2}\mathfrak{B}(\rho,\sigma).\label{eq:Bures_Trace}
\end{equation}
The trace norm between two quantum states has the operational interpretation that it quantifies the probability with which two states can be distinguished by a global measurement.

Second, if $\rho^{ACR}$ is close to $\rho^{A} \otimes \rho^{CR}$, it implies that erasure of region $A$ can be corrected by some map supported on $AB$. This is because
such a factorization implies that the expression in Eq. (\ref{eq:decoupling}) is small, which subsequently implies that the expression in Eq. (\ref{eq:correctability}) is small. The latter equation, in words, says
that the original state is close to the state that is created by (i) erasing $A$ and then (ii) applying some recovery map on $AB$. The converse direction also works. If
there exists a recovery map on $AB$ that can correct the erasure of $A$, then $\rho^{ACR}$ should be close to the form of $\omega^A \otimes \rho^{CR}$ by Theorem \ref{thm:decoupling}.
Because these two states must be also close to each other over their subsystems, $\omega^A$ should be close to $\rho^A$, establishing the converse direction.

To summarize, by exploiting the basic structure of the MERA network, one can tightly bound correlations between two subsystems. This bound in turn, by using Theorem \ref{thm:decoupling}, implies
that erasure of certain regions are correctable. This establishes how well the logical information at different scales are protected.

\section{MERA as an approximate quantum error correcting code\label{section:mera_qec}}

We have already formally defined the code subspace $\cC_s \subset \mathcal{H}_0$. What remains is to study the properties of the code subspace. What
kind of erasures are correctable? If they are correctable, how well can those errors be reversed? As we shall see, the analysis follows naturally from the framework that we have constructed in Sec. \ref{section:entanglement_renormalization}.
We begin by a simple warm-up exercise, wherein we study the correctability of simply connected regions. We then move on to studying the correctability of more general regions and deriving a
fundamental tradeoff bound. The key technical result is Theorem 2, which establishes the local correctability of these codes.


\subsection{Correctability of simply connected regions}
As a warm-up exercise, we show that erasure of any simply connected region $A$ can be approximately corrected up to a small error if $s\gg\log_2|A|$.
\begin{lem}
	\label{lem:simply_connected}
	Let $\cC_s$ be an RG-regular MERA code. Then for any $O_{AR} \in \mathcal{B}(\mathcal{H}_A \otimes \mathcal{H}_R)$ where $A$ is a simply connected region, and any purified code state
	$\rho^{AA^cR}$,
	\begin{equation}
		|\tr[(\rho_0^{AR} - \rho_0^A \otimes \rho_0^R)O_{AR}]| \leq C \| O_{AR} \| 2^{-\nu(s-\log_2|A|)}.\label{eq:correctability_bound}
	\end{equation}
\end{lem}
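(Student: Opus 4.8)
The plan is to transport the scale-$0$ correlation on the left-hand side up to scale $s$ and then read off the decay from the mixing of the transfer operator. First I would combine the renormalization identity \eqref{eq:renormalization_identity} in the case $A'=\emptyset$, which gives $\tr[\rho_0^{AR}O_{AR}] = \tr_{A_sR}[\rho_s^{A_sR}\,\Phi_0^s(O_{AR})]$, with Lemma \ref{lem:clustering_identity_simple}, which gives $\tr[\rho_0^A\otimes\rho_0^R\,O_{AR}] = \tr[\rho_s^{A_s}\otimes\rho_s^R\,\Phi_0^s(O_{AR})]$. Subtracting yields the exact rewriting
\[
\tr[(\rho_0^{AR}-\rho_0^A\otimes\rho_0^R)\,O_{AR}] = \tr[(\rho_s^{A_sR}-\rho_s^{A_s}\otimes\rho_s^R)\,\Phi_0^s(O_{AR})],
\]
where locality of the renormalization map confines $\Phi_0^s(O_{AR})$ to $A_s\otimes R$. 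This moves the entire estimate to scale $s$.

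Next I would split the $s$ coarse-graining steps into two stages. Since $A$ is simply connected, its past causal cone shrinks at a constant rate, so after $s_0 = \log_2|A| + O(1)$ steps the support $A_{s_0}$ is a single elementary block of size $O(1)$; for the remaining $n := s - s_0$ steps the operator stays on that block and is acted on solely by the transfer operator $\Phi$. Writing $\tilde O := \Phi_0^{s_0}(O_{AR})$, supported on (elementary block)$\otimes R$, we have $\Phi_0^s(O_{AR}) = (\Phi^n\otimes\mathrm{id}_R)(\tilde O)$, the identity on $R$ reflecting that the purifying space is never touched by coarse-graining.

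The heart of the argument is the mixing estimate via the spectral decomposition \eqref{eqn:channel}: $\Phi^n(X) = \tr[X\rho_{\rm ss}]\,\1 + \sum_{k\geq 1}\lambda_k^n\,\tr[XR_k]\,L_k$. Writing $\mathcal{E}(X) := \tr[X\rho_{\rm ss}]\1$, the $k=0$ term is proportional to $\1$ on $A_s$, hence of the form $\1_{A_s}\otimes Z_R$ once tensored with the passive $R$; such operators are annihilated by $\rho_s^{A_sR}-\rho_s^{A_s}\otimes\rho_s^R$ because tracing out $A_s$ returns $\rho_s^R-\rho_s^R=0$. Only the $k\geq1$ terms then survive, each carrying $|\lambda_k|^n$, bounded by $2^{-\nu n}$ under RG-regularity (with $2^{-\nu}=\mathrm{Re}\,\lambda_1$), so that $\|((\Phi^n-\mathcal{E})\otimes\mathrm{id}_R)(\tilde O)\| \leq C'\,\|\tilde O\|\,2^{-\nu n}$. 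Applying H\"older's inequality with $\|\rho_s^{A_sR}-\rho_s^{A_s}\otimes\rho_s^R\|_1\leq 2$ and the norm-nonincreasing property $\|\tilde O\| = \|\Phi_0^{s_0}(O_{AR})\|\leq\|O_{AR}\|$ collects everything into $C\,\|O_{AR}\|\,2^{-\nu(s-\log_2|A|)}$, the $O(1)$ shift in $s_0$ being absorbed into $C$.

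The main obstacle I anticipate is making the mixing estimate rigorous at the level of the operator norm rather than a single matrix element. Two points need care: (i) the eigenvalue decay $|\lambda_k|^n$ must be converted into a genuine norm bound on $(\Phi^n-\mathcal{E})(X)$, which requires controlling $\sum_{k\geq1}|\tr[XR_k]|\,\|L_k\|$ by a constant multiple of $\|X\|$, and, should $\Phi$ fail to be diagonalizable, absorbing polynomial Jordan-block prefactors into $2^{-\nu' n}$ for any $\nu'<\nu$; and (ii) since $R$ is a spectator, the bound must survive tensoring with $\mathrm{id}_R$, i.e.\ one needs the completely-bounded form of the decay estimate, which is cleanly obtained by applying the spectral decomposition termwise to the operator Schmidt components of $\tilde O$ across the $A_s$:$R$ cut, whose $R$-factors are untouched. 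One should also note that the identification of the subdominant decay rate with $2^{-\nu}$ is cleanest when $\lambda_1$ is real and positive, as expected for the transfer operator of a critical scale-invariant MERA. Both points are standard for ergodic quantum channels with a spectral gap, so the constant $C$ depends only on the fixed transfer operator $\Phi$, not on $A$, $s$, or the code state.
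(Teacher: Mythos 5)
Your proposal is correct and takes essentially the same route as the paper's own proof: transport the correlation to scale $s$ via the renormalization identity and Lemma~\ref{lem:clustering_identity_simple}, split the coarse-graining at $r_A\approx\log_2|A|$ where the support reaches an elementary block, and apply the spectral decomposition of the transfer operator termwise to the operator Schmidt components across the block--$R$ cut, with the fixed-point ($\1$) term annihilated by $\rho_s^{A_sR}-\rho_s^{A_s}\otimes\rho_s^R$ and the remaining terms giving the $2^{-\nu(s-r_A)}$ decay. The technical points you flag are handled exactly as you anticipate (the paper's constant is $C=2d^2$, coming from the $d^2$ Schmidt terms and the trace-norm bound), and the Jordan-block concern does not arise because RG-regularity is defined to include the diagonalizable form of Eq.~(\ref{eqn:channel}).
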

\begin{proof}
	First recall the following two identities.
	\begin{equation}
	\tr[\rho_0^{AR} O_{AR}] = \tr[\rho_s^{A_sR} \Phi_0^{s}(O_{AR})],
	\end{equation}
	\begin{equation}
		\tr[\rho_0^A \otimes \rho_0^{R} O_{AR}] = \tr[\rho_s^{A_s}\otimes \rho_0^R \Phi_0^{s}(O_{AR})].
	\end{equation}
	The first identity follows trivially from the definition and the second one follows from Lemma \ref{lem:clustering_identity_simple}.
	Let us denote the left hand side of Eq. (\ref{eq:correctability_bound}) as $\delta$. The two identities  above imply
	\begin{align}
		\delta&=\tr[(\rho_s^{A_sR}-\rho_s^{A_s} \otimes \rho_s^{R}) \Phi_0^s(O_{AR})] \nonumber\\
								    &=\tr[(\rho_s^{A_sR}-\rho_s^{A_s} \otimes \rho_s^{R}) \Phi_{r_A}^s \Phi_0^{r_A}(O_{AR})] \nonumber\\
								    &=\tr[(\rho_s^{A_sR}-\rho_s^{A_s} \otimes \rho_s^{R}) \Phi_{r_A}^s (O_{A_{r_A}R})] \nonumber\\
								    &=\tr[(\rho_s^{A_sR}-\rho_s^{A_s} \otimes \rho_s^{R}) (\sum_{j=1}^{d^2} \Phi_{r_A}^s(O_{A_{r_A},j}) \otimes O_{R,j})],\nonumber
	\end{align}
	where $O_{A_{r_A}R}=\Phi_0^{r_A}(O_{AR})$ and $r_A$ is chosen such that $O_{A_{r_A}R}$ is supported on one of the elementary blocks and the purifying space. This operator can,
	without loss of generality,  be decomposed into the operator Schmidt decomposition $\sum_{j=1}^{d^2}  O_{A_{r_A},j} \otimes O_{R,j}$, where the norm of each of the terms is bounded
	by $\|O_{AR}\|$. Here $d$ is the dimension of the elementary block, which is bounded by some constant.

	The unique fixed point of $\Phi_{r_A}^s$ is the identity operator, and $\tr[(\rho_s^{A_sR} - \rho_s^{A_s}\otimes \rho_s^R)\1\otimes O]=0$ for any operator $O$. Therefore,
	\begin{align}
		\delta&=\sum_{j=1}^{d^2}\tr[\rho_s^{A_sR} - \rho_s^{A_s} \otimes \rho_s^{R}] \nonumber\\
		&\sum_{k\neq 0} \lambda_k^{s-r_A}\tr [L_kO_{A_{r_A},j}]R_k \otimes O_{R,j}\nonumber\\
						       &\leq 2d^2 \|O_{AR} \|2^{-\nu(s-r_A)}.
	\end{align}
One can see that Eq.\ref{eq:correctability_bound} holds with a choice of constant $C=2d^2$.
\end{proof}

By invoking Theorem \ref{thm:decoupling}, we can easily show that the region $A$ is correctable.
\begin{corollary}
	For an RG-regular MERA code $\cC_s$, and for any simply connected region $A$, there exists a CPTP $\mathcal{R}$ acting on $\mathcal{H}_0$ such that
	\begin{equation}
		\|\mathcal{R}(\rho^{A^cR}) - \rho^{AA^cR} \|_1 \leq C2^{-\nu(s-\log_2|A|)/2},
	\end{equation}
	where $C$ is a numerical constant, and $A^c$ denotes the complement of $A$.
\end{corollary}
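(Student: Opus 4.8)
The plan is to combine the correlation estimate of Lemma~\ref{lem:simply_connected} with the decoupling--recoverability duality of Theorem~\ref{thm:decoupling}, applied with $B=A^c$ and the region $C$ of that theorem taken empty, so that $AB=\cH_0$ and $CR=R$. First I would upgrade the scalar bound of Lemma~\ref{lem:simply_connected} into a trace-norm bound. Since $\|X\|_1=\sup_{\|O\|\leq 1}|\tr[XO]|$, taking the supremum over all $O_{AR}$ with $\|O_{AR}\|\leq 1$ in Eq.~(\ref{eq:correctability_bound}) gives, uniformly over all purified code states,
\begin{equation}
\|\rho_0^{AR}-\rho_0^A\otimes\rho_0^R\|_1\leq C\,2^{-\nu(s-\log_2|A|)}.\nonumber
\end{equation}
Feeding this through the left inequality of Eq.~(\ref{eq:Bures_Trace}) converts it into the Bures estimate $\mathfrak{B}(\rho_0^{AR},\rho_0^A\otimes\rho_0^R)\leq \sqrt{C/2}\,\,2^{-\nu(s-\log_2|A|)/2}$; note that the square root incurred here is precisely what produces the factor $1/2$ in the exponent claimed in the Corollary.

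Next I would match this to the decoupling functional of Eq.~(\ref{eq:decoupling}), which for empty $C$ reads $\min_{\omega^A}\sup_{\rho}\mathfrak{B}(\omega^A\otimes\rho^R,\rho^{AR})$. To bound it I must exhibit a \emph{single} reference state $\omega^A$ that works for every code state at once. Here lies the main obstacle: Lemma~\ref{lem:simply_connected} only yields $\rho^{AR}\approx\rho^A\otimes\rho^R$ with the state-\emph{dependent} marginal $\rho^A$, whereas the minimization demands a universal $\omega^A$. I would resolve this by observing that decoupling of $A$ from the purifying space $R$ forces the marginal on $A$ to be essentially independent of the encoded information: applying the estimate to the code state whose logical register is maximally mixed (so that $R$ carries a full copy of the logical data) shows that $\rho^A$ conditioned on any logical state lies within $O(2^{-\nu(s-\log_2|A|)/2})$ in Bures distance of the unconditional marginal. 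Taking $\omega^A$ to be that unconditional marginal and using $\mathfrak{B}(\omega^A\otimes\rho^R,\rho^A\otimes\rho^R)=\mathfrak{B}(\omega^A,\rho^A)$ (multiplicativity of fidelity under the common factor $\rho^R$) together with the triangle inequality then bounds the decoupling functional by a constant multiple of $2^{-\nu(s-\log_2|A|)/2}$.

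Finally I would invoke Theorem~\ref{thm:decoupling}: the asserted equality of the two functionals transfers this bound verbatim to the recoverability functional of Eq.~(\ref{eq:correctability}), producing a CPTP map $\mathcal{R}$ from $\mathcal{B}(\cH_{A^c})$ to $\mathcal{B}(\cH_0)$ with $\mathfrak{B}(\mathcal{R}(\rho^{A^cR}),\rho^{AA^cR})$ bounded by the same quantity for every code state. Converting this Bures bound back to trace norm through the right inequality of Eq.~(\ref{eq:Bures_Trace}) yields $\|\mathcal{R}(\rho^{A^cR})-\rho^{AA^cR}\|_1\leq C'\,2^{-\nu(s-\log_2|A|)/2}$, which is the claimed estimate once the numerical prefactors are absorbed into $C$. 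Apart from the universality-of-the-marginal step identified above, the only remaining care is the bookkeeping of constants through the two Bures--trace conversions, which is routine.
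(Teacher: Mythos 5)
Your route is the paper's route: the paper proves this corollary in one sentence, by feeding Lemma \ref{lem:simply_connected} into Theorem \ref{thm:decoupling} (with $B=A^c$ and $C$ empty) and converting between trace norm and Bures distance through Eq.~(\ref{eq:Bures_Trace}); your bookkeeping, including the origin of the exponent $\nu(s-\log_2|A|)/2$ from the square root in the Bures--trace conversion, is exactly that calculation. To your credit, you also flag a step the paper passes over silently: in Eq.~(\ref{eq:decoupling}) the minimization over $\omega^A$ sits \emph{outside} the supremum over code states, so one must exhibit a single reference state $\omega^A$ that works for every code state, whereas Lemma \ref{lem:simply_connected} only decouples each code state from $R$ with its own, state-dependent marginal $\rho^A$.

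Your patch for this step, however, fails quantitatively as stated. Conditioning the maximally mixed logical register on a given pure logical state is a post-selection on $R$ whose success probability is $1/\dim\cC_s$, and post-selection amplifies trace-norm (or Bures) errors by the inverse of that probability. Since $\dim\cC_s=\dim\cH_s$ is not assumed constant --- it is exponential in the number of coarse-grained sites, of order $2^{N/2^s}$ --- the amplified error $\dim\cC_s\cdot 2^{-\nu(s-\log_2|A|)}$ is vacuous. Two ways to repair it: (i) run the same conditioning argument on the $50/50$ mixture of two \emph{orthogonal} pure logical states, where the post-selection probability is $1/2$; this shows that all pure logical states have $A$-marginals within $O(2^{-\nu(s-\log_2|A|)})$ of one another in trace norm (comparing any two non-orthogonal states through a common orthogonal third one), and convexity extends this to mixed code states, so any fixed marginal may serve as $\omega^A$. (ii) More economically, the proof of Lemma \ref{lem:simply_connected} already produces a universal marginal: since $\Phi_{r_A}^{s}(O)=\tr[\rho_{\rm ss}O]\1+\sum_{k\neq 0}\lambda_k^{s-r_A}\tr[L_kO]R_k$, one has, for \emph{every} code state, $\tr[\rho_0^A X_A]=\tr[\rho_{\rm ss}\Phi_0^{r_A}(X_A)]+O(\|X_A\|2^{-\nu(s-\log_2|A|)})$, and the leading term is code-state independent; taking $\omega^A$ to be the state $X_A\mapsto\tr[\rho_{\rm ss}\Phi_0^{r_A}(X_A)]$ (well defined because $\Phi_0^{r_A}$ is unital and completely positive) bounds Eq.~(\ref{eq:decoupling}) directly, with no further loss. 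With either repair, the remainder of your argument goes through as written.
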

The proof simply follows by applying Theorem \ref{thm:decoupling} and then using the relation between the Bures distance and the trace distance (Eq.\ref{eq:Bures_Trace}).

Our findings support the conclusion of Ref. \cite{Pastawski2016}, in which it was suggested that low energies of the critical systems should have a certain error correction
property. In particular, our work provides a satisfying answer to the question: how does an error correcting code emerge in these systems? It arises from the fact that
the ground state can be well-approximated by a MERA state.

\subsection{Local correctability}
As was the case in Ref. \cite{Flammia2016}, the notion of local correctability plays an important role in our applications. We derive this for RG-regular MERA codes.

\begin{thm}\label{thm:local_correctability}
	Let $\cC_s$ be an RG-regular MERA code. Let $A$ be a simply connected region and let $B$ be a region shielding $A$ such that $AB$ is a set of sites that are distance $x$
	or less away from $A$ and $|AB|<2^s$.
	 $C$ is the complement of $AB$. Then there exists a recovery map $\mathcal{R}_{B}^{AB}: \mathcal{B}(\mathcal{H}_B) \to \mathcal{B}(\mathcal{H}_{AB})$ such that
	\begin{equation}
		\|\mathcal{R}_{B}^{AB}(\rho_0^{BCR}) - \rho_0^{ABCR} \|_1 \leq c\left(\frac{|A|}{x} \right)^{\nu/2} \label{eq:generalized_clustering}
	\end{equation}
	for all purified code states $\rho_0^{ABCR}$,
	where $c$ is a numerical constant.
\end{thm}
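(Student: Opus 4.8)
The plan is to reduce the claim to a decoupling estimate via Theorem \ref{thm:decoupling} and then control that decoupling with the calculus of Section \ref{section:entanglement_renormalization}. By Theorem \ref{thm:decoupling}, the recovery error in Eq. (\ref{eq:generalized_clustering}), expressed in Bures distance, equals the decoupling quantity $\min_{\omega^A}\sup_{\rho}\mathfrak{B}(\omega^A\otimes\rho^{CR},\rho^{ACR})$. I would choose $\omega^A=\rho_0^A$ and invoke the Bures–trace comparison (Eq. \ref{eq:Bures_Trace}), so that it suffices to establish a bound, uniform over all purified code states, of the form $\sup_\rho\|\rho_0^{ACR}-\rho_0^A\otimes\rho_0^{CR}\|_1\le c'(|A|/x)^{\nu}$. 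Taking the square root turns this into a Bures bound of order $(|A|/x)^{\nu/2}$ for the decoupling quantity; feeding it back through Theorem \ref{thm:decoupling} and converting the resulting Bures bound on the correctability side to trace norm (again via Eq. \ref{eq:Bures_Trace}) yields the stated estimate with a fresh numerical constant.

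To bound the trace norm I would write it as a maximum over test operators, $\|\rho_0^{ACR}-\rho_0^A\otimes\rho_0^{CR}\|_1=\max_{\|O_{ACR}\|\le 1}|\tr[(\rho_0^{ACR}-\rho_0^A\otimes\rho_0^{CR})O_{ACR}]|$, and mimic the proof of Lemma \ref{lem:simply_connected}. The basic renormalization identity (Eq. \ref{eq:renormalization_identity}) rewrites the $\rho_0^{ACR}$ term at a higher scale, while the general clustering identity, Lemma \ref{lem:clustering_identity_general}, does the same for the factorized term $\rho_0^A\otimes\rho_0^{CR}$; this is exactly where the hypothesis that the past causal cones of $A$ and $C$ stay disjoint enters. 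The difference then becomes $\tr[(\rho_{s^*}^{A_{s^*}C_{s^*}R}-\rho_{s^*}^{A_{s^*}}\otimes\rho_{s^*}^{C_{s^*}R})\Phi_0^{s^*}(O_{ACR})]$ at an intermediate scale $s^*$. I would first renormalize to scale $r_A\approx\log_2|A|$, at which $A$ collapses to a single elementary block, Schmidt-decompose $\Phi_0^{r_A}(O_{ACR})$ across the cut $A_{r_A}\,|\,C_{r_A}R$, and then continue renormalizing the $A$-block with the transfer operator $\Phi$. Expanding in the eigenbasis of $\Phi$ (Eq. \ref{eqn:channel}), the $\lambda_0=1$ stationary component is precisely the piece removed by subtracting $\rho^A\otimes\rho^{CR}$, so only the $k\neq 0$ terms survive and contribute a factor bounded by $2^{-\nu(s^*-r_A)}$; since the trace norm of the state difference is at most $2$, the estimate is uniform in $\rho$.

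The crux is the geometric bookkeeping that fixes $s^*$. Because $B$ shields $A$ by a width $x$, the gap between the causal cones of $A$ and $C$ contracts under coarse-graining roughly like $x/2^{s'}$, so the cones remain disjoint—and Lemma \ref{lem:clustering_identity_general} remains applicable—up to a scale $s^*\approx\log_2 x$; the hypothesis $|AB|<2^s$ guarantees $s^*\le s$, so renormalizing up to $s^*$ is legitimate. Inserting $s^*-r_A\approx\log_2 x-\log_2|A|=\log_2(x/|A|)$ converts the spectral factor into $2^{-\nu\log_2(x/|A|)}=(|A|/x)^{\nu}$, exactly the quantity that feeds the square root above. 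I expect the main obstacle to be making this causal-cone separation quantitatively tight—pinning down the precise scale at which the two cones touch together with the attendant constant, and checking that at that scale $A$ has genuinely contracted to a single elementary block, which is the nontrivial regime $x\gtrsim|A|$. The spectral-decay and duality ingredients are then essentially a repackaging of Lemma \ref{lem:simply_connected} and Theorem \ref{thm:decoupling}.
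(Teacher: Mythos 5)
Your proposal follows essentially the same route as the paper's own proof: bound the decoupling quantity $\tr[(\rho^{ACR}-\rho^A\otimes\rho^{CR})O_{ACR}]$ by renormalizing with the identities of Lemma \ref{lem:clustering_identity_general}, Schmidt-decomposing at the scale $r_A\approx\log_2|A|$ where $A$ fills an elementary block, running the transfer operator up to the scale $\approx\log_2 x$ where the causal cones of $A$ and $C$ meet so that only the $k\neq 0$ spectral terms survive with weight $2^{-\nu(r-r_A)}\approx(|A|/x)^{\nu}$, and finally converting to the recovery statement via Theorem \ref{thm:decoupling} and the Bures--trace inequalities (which is where the exponent $\nu/2$ arises). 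Your write-up is in fact slightly more careful than the paper's about the order of quantifiers in Theorem \ref{thm:decoupling} and about where the square root enters, but the decomposition, key lemmas, and scale bookkeeping are identical.
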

\begin{proof}
The proof is similar to that of Lemma \ref{lem:simply_connected}. First recall the following two identities.
\begin{equation}
\tr[\rho_0^{ACR}O_{ACR}]=\tr[\rho_s^{A_sC_sR}\Phi_0^s(O_{ACR})],
\end{equation}
\begin{equation}
\tr[\rho_0^{A}\otimes \rho_0^{CR}O_{ACR}] = \tr[\rho_s^{A_s}\otimes \rho_s^{C_sR} O_{ACR}],
\end{equation}
provided that the past causal cone of $A$ and $C$ do not overlap with each other all the way up to a scale $s.$
The first identity follows from the definition and the second identity follows from Lemma \ref{lem:clustering_identity_general}. Let us denote the left hand side of Eq. (\ref{eq:generalized_clustering}) as $\delta$. The two identities above imply that
\begin{align}
\delta &= \tr[(\rho_r^{A_rC_rR} - \rho_r^{A_r}\otimes \rho_r^{C_rR}) \Phi_0^r(O_{ACR})]\\
&=\tr[(\rho_r^{A_rC_rR} - \rho_r^{A_r}\otimes \rho_r^{C_rR})\Phi_{r_A}^{r}\Phi_0^{r_A}(O_{ACR})] \\
&= \tr[(\rho_r^{A_rC_rR} - \rho_r^{A_r}\otimes \rho_r^{C_rR})\Phi_{r_A}^{r}(O_{A'C'R})]
\end{align}
where $O_{A'C'R}=\Phi_{0}^{r_A}$ is an operator supported on $A'=A_{r_A}$, $C'=C_{r_A}$, and $R$. Here $r_A$ is chosen such that $A_{r_A}$ is contained in an elementary block and $r$ is chosen to be the scale after which the past causal cones of $A$ and $CR$ overlap with each other. This happens when $x$ is shrunk to a size of $O(1)$. Thus, it can be chosen to be $r=\log_2 x +O(1)$, where $O(1)$ is a non-universal constant of order unity.

Now consider the following operator Schmidt decomposition:
\begin{equation}
O_{A'C'R} = \sum_{j=1}^{d^2} O_{A', j}\otimes O_{C'R, j}.
\end{equation}
The action of $\Phi_{r_A}^{r}$ on this operator is of the following form:
\begin{equation}
\Phi_{r_A}^{r} = \sum_{j=1}^{d^2} \Phi_{r_A}^{r}(O_{A',j}) \otimes \Phi_{r_A}^{r}(O_{C'R,j}),
\end{equation}
due to the fact that the past causal cone of $A'$ and $C'$ do not overlap up to this point.

The unique fixed point of $\Phi_{r_A}^r$ is the identity operator, and $\tr[(\rho_r^{A_rC_rR} - \rho_r^{A_r}\otimes \rho_r^{C_rR}) \1\otimes O]=0$ for any operator $O$. The norm of the remaining $(d^2-1)$ terms are bounded by $\|O_{ACR}\|2^{-\nu(r-r_A)}$. By the aforementioned choice, i.e., $r=\log x + O(1)$ and $r_A= \log|A| + O(1)$, we conclude $r-r_A \geq \log_2(\frac{x}{|A|}) + O(1)$, thus yielding a bound on $\delta.$ By invoking Theorem \ref{thm:decoupling}, the theorem is proved.
\end{proof}

An important consequence of local correctability is that two distant correctable regions are jointly correctable. In the context of quantum error correction this property is called the union lemma. Indeed, suppose that regions $A_1$ and $A_2$ are both locally correctable on $A_1B_1$ and $A_2B_2$ up to error $\epsilon$ each. Then $A_1A_2$ is locally correctable on $A_1A_2B_1B_2$ up to error $2\epsilon$ if $A_1B_1\cap A_2B_2=\emptyset$ (see Lemma 11 in Ref. \cite{Flammia2016}).

\begin{figure*}
	\begin{center}
		\includegraphics[scale=0.30]{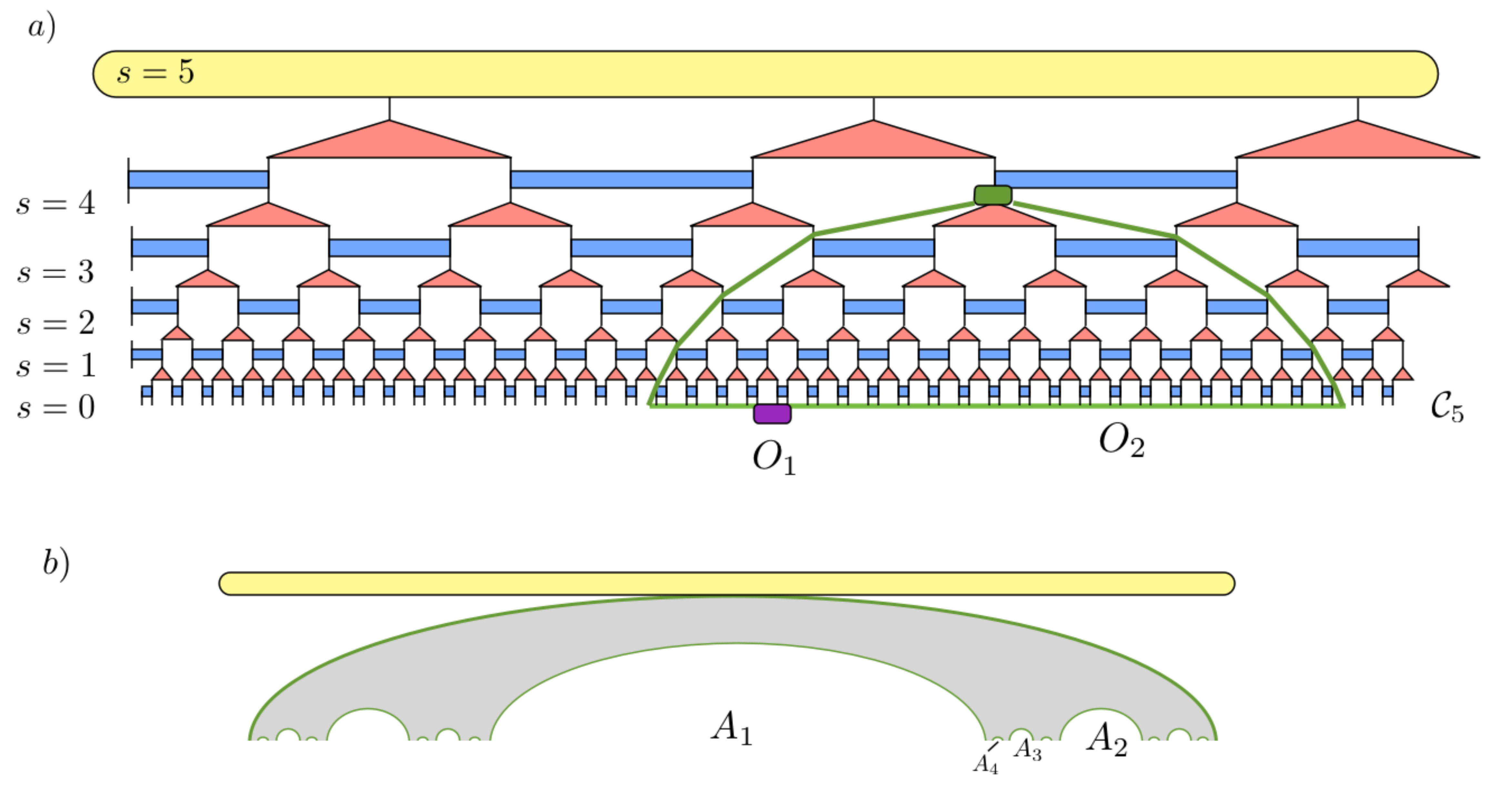}
	\caption{
      a) The setup described in Section \ref{sec:LR} where observables $O_1$ and $O_2$ act on $\cH_0$. Observable $O_2$ is a logical operator of $\cC_s$, in that it maps elements of $\cC_s$ to $\cC_s$. The Lieb-Robinson type bound of Eq. (\ref{eqn:LRB}) tells us how correlated the time evolution of $O_1$ is with respect to $O_2$.  b) The minimal correctable region of $\cC_s$ is also the minimal support of a logical operator, which corresponds to the distance of the error correcting code. We see that it takes a cantor-set type form, as already suggested in Ref. \cite{Pastawski2016a}. 	}\label{fig:MERA2}
			\end{center}
\end{figure*}

\section{Applications\label{section:applications}}
There are many implications of Theorem \ref{thm:local_correctability}. As was the case in Ref. \cite{Flammia2016} this forms the basis behind deriving fundamental tradeoff bounds for MERA codes. Furthermore, it also implies that two observables that are largely separated in scale compared to $1/\nu$ behave as if they are space-like separated operators. In particular, they obey a Lieb-Robinson type locality bound.

\subsection{Tradeoff bounds}

In this section we will derive bounds on the minimal support of a bulk logical operator on the boundary. In terms of quantum error correcting codes, this quantity corresponds to the \textit{distance} of the code.
For simplicity, we consider the  limit: $\nu \to \infty$. We do not expect this limit to be physical, because to our knowledge, no such theory is known at this point. However, it is the limit in which all of our statements become exact. In particular, we partially recover the so called `uberholography,' which was suggested recently by Pastawski and Preskill \cite{Pastawski2016a}. In this limit, all correlations outside the bulk lightcone vanish completely.

Consider an RG-regular MERA code $\cC_s$ of $n$ physical qubits. At this point, we do not restrict $|\cC_s|$ to being constant. From Theorem \ref{thm:local_correctability}, we know that  any state $\rho\in\cC_s$ can be recovered from $\rho_{A^c}$ by applying a channel on $AB$, provided that $x \geq |A|$, where $AB$ is a set of sites that are distance $x$ or less away from $A$. By choosing $x=|A|+O(1)$, we see that a subsystem $A$ of size less than $2^s/z$ can be locally corrected from such $B$, where $z= |AB|/|A|=3$.


Therefore, the logical information of $\cC_S$ can be recovered from these $N-2^s/z$ qubits. However, we can do better than this. By the so called union lemma\cite{Bravyi2008,Bravyi2010,Flammia2016}, two disconnected correctable regions $A_1$ and $A_2$ are jointly correctable if their local recovery maps have non-overlapping supports. This implies that $n/2^s+O(1)$ many regions of size $2^s/z$ are jointly correctable, implying that in fact only $n(1-\frac{1}{z})+O(1)$ many qubits are required to recover a code state.


It turns out that we can do even better. Let $\cR_{AB}$ be the map recovering erasure of region $A$. $\cR_{AB}$ takes as input the state $\rho_B$ and outputs $\rho_{AB}$: $\cR_{AB}(\rho_B)= \rho_{AB}$. $B$ is compose of a left and a right component: $B=B^LB^R$ such that $AB=B^LAB^R$. But if $B^L$ can be broken up into three regions $B^L=B_1^LA_1B_1^R$, where $B_1^{L,R}$ are the left and right parts of $B^L$, then we get: $\cR_{AB}\cR_{A_1B_1}(\rho_{B1})=\rho_{AB}$ (see Fig. \ref{fig:MERA2}b for an illustration). We can now iterate until we are left with $2^g$ regions of constant size. We now estimate what value $g$ can take. The smallest elements have size $O((\frac{z-1}{2z})^g|AB|)$, with $AB$ the original region. We want to know what fraction of $AB$ is left after the $g$ steps, or $2^g=|AB|^\alpha$. This yields
\begin{equation}
\alpha = \frac{\log(2)}{\log\left(\frac{2z}{z-1}\right)}.
\end{equation}

In terms of error correcting codes, we get that $|AB|=O(n/k)$, because $k=2^{\log_2(n)-s}$ and $|AB|=2^s$ by construction,   so that that distance (the smallest support of a logical operator) satisfies $d\leq C(n/k)^\alpha$, for some constant $C$, and $\alpha=\log(2)/\log(2z/(z-1))\approx 0.63$ for $z=3$. Note that our bound differs from Ref.\cite{Pastawski2016a}; there $\alpha \approx 0.78$, which yields a weaker bound. This is because our notion of local correctability is stronger than that of \cite{Pastawski2016a}; erasure of a simply connected region can be corrected if $x>|A|$ in our setup, but the setup of \cite{Pastawski2016a} requires $x>c|A|$, where $c\approx 2.414$. Indeed, the existence of the operators with small scaling dimensions in holography implies that the $\nu\to \infty$ limit cannot be an adequate description of such theories.



\subsection{Emergent lightcone}\label{sec:LR}
In this section, we establish a bound on how fast correlations between bulk local observables build up in time.
We will bound a commutator of the following form:
\begin{equation}
\bra{\rho_0}[O_1(t), O_2]\ket{\sigma_0},\label{eq:LR_emergent}
\end{equation}
where $O_1(t)$ is a local operator acting on $\mathcal{H}_0$, $O_2$ is a logical operator of $\cC_s$, and the time evolution is generated by a local Hamiltonian. We derive an upper bound, which remains  small provided that $|t|$ is  small compared to $2^{\nu s}$ up to some multiplicative constants. It is interesting to compare this bound to the well-known Lieb-Robinson bound \cite{Lieb1972}, which states that
\begin{equation}
\|[O_1(t),O_2]\|\leq c \| O_1 \| \| O_2 \|\exp(-\frac{L-v|t|}{\xi}), \label{eq:LR_original}
\end{equation}
where $L$ is the distance between the nontrivial support of $O_1$ and $O_2$, $v$ is the Lieb-Robinson velocity, and $\xi$ is a constant that depends on the underlying interaction graph. The main difference is that Eq. (\ref{eq:LR_original}) holds in the entire Hilbert space, while Eq. (\ref{eq:LR_emergent}) only holds in a low energy subspace, i.e., the code subspace $\cC_s$.

Obviously, a more refined bound would involve the size and the location of the supports of $O_1$ and $O_2$, but that is beyond the scope of this paper. Here we focus on a simpler setting, in which $O_1$ is assumed to be a local operator and $O_2$ to be an arbitrary logical operator in the code subspace.

Because the dynamics in the physical Hilbert space is assumed to be generated by a local Hamiltonian, observables under this time evolution obey Eq. (\ref{eq:LR_original}) with an appropriate choice of $v,c,$ and $\xi$. From this fact, we can derive the following bound.
\begin{thm}
For an RG-regular MERA code $\cC_s$, a local physical operator $O_1$ and a logical operator $O_2$ of $\cC_s$, we get
\begin{equation}
|\bra{\rho_0} [O_1(t),O_2] \ket{\sigma_0} |\leq c' (v|t| + \xi \nu s)^{\nu} 2^{-\nu s},\label{eqn:LRB}
\end{equation}
where $O_1(t)=e^{iHt} O_1 e^{-iHt}$, $c'$ is a constant, $\nu$ is the scaling dimension, $v$ is the Lieb-Robinson velocity of $H$, and $\xi$ is a numerical constant that depends on the interaction graph
of $H$.
\end{thm}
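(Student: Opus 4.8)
The plan is to combine the ordinary Lieb--Robinson bound in the physical Hilbert space with the local correctability of Theorem \ref{thm:local_correctability}, and then to trade off the two resulting error terms by an optimal choice of length scale. First I would use Eq.~(\ref{eq:LR_original}) to replace the Heisenberg-evolved operator $O_1(t)$ by a strictly local approximant $\tilde{O}_1$ supported on a ball $A$ of radius $r$ centered on the support of $O_1$. The standard construction (averaging $O_1(t)$ over unitaries supported outside $A$, equivalently tracing against the maximally mixed state on $A^c$) yields such an $\tilde{O}_1$ with $\|O_1(t)-\tilde{O}_1\|\leq c\|O_1\|\,e^{-(r-v|t|)/\xi}$. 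Substituting, the matrix element $\bra{\rho_0}[O_1(t),O_2]\ket{\sigma_0}$ differs from $\bra{\rho_0}[\tilde{O}_1,O_2]\ket{\sigma_0}$ by at most $2\|O_2\|$ times this Lieb--Robinson tail, so it remains to bound the commutator of a \emph{local} operator on $A$ with the logical operator $O_2$ between code states.

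For the second step I would exploit that, as long as $|A|\sim r$ is small compared to $2^s$, the region $A$ is locally correctable, hence decoupled from its complement $CR$ in the sense of Lemma \ref{lem:clustering_identity_general} and the intermediate estimate in the proof of Theorem \ref{thm:local_correctability}. Because $O_2$ preserves $\cC_s$, both $O_2\ket{\sigma_0}$ and $\bra{\rho_0}O_2$ are again purified code states, so $\bra{\rho_0}[\tilde{O}_1,O_2]\ket{\sigma_0}$ reduces to a difference of matrix elements of $\tilde{O}_1$ on the correctable region $A$ taken between code states. Decoupling says precisely that such matrix elements are insensitive to the logical content residing on $CR$ up to the clustering error: concretely one cleans $O_2$ to an equivalent logical operator supported on $A^c$, which commutes exactly with $\tilde{O}_1$, and the residual is controlled by the connected-correlation bound $\sim\|\tilde{O}_1\|\,\|O_2\|\,(|A|/2^s)^{\nu}$ furnished by Theorem \ref{thm:local_correctability} with the shielding distance taken up to the code scale. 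It is this clustering quantity, with exponent $\nu$ rather than the square-root trace-distance form, that produces the factor $2^{-\nu s}$.

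Finally I would optimize $r$. The total error is the sum of the Lieb--Robinson tail $e^{-(r-v|t|)/\xi}$ and the clustering error $(r/2^s)^{\nu}=2^{-\nu(s-\log_2 r)}$; balancing the exponents forces $r\approx v|t|+\xi\nu s$ up to a slowly varying logarithmic correction, and substituting back gives the advertised bound $c'(v|t|+\xi\nu s)^{\nu}2^{-\nu s}$. I expect the crux to be the second step: making precise the claim that a logical operator can be commuted past a local operator supported on a correctable region with an error governed by the \emph{exponent-$\nu$} clustering quantity, rather than by the weaker $\nu/2$ trace-distance correctability of Theorem \ref{thm:local_correctability}. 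This requires carefully handling the off-diagonal nature of the matrix element (two distinct code states $\rho_0,\sigma_0$) and the fact that $O_2$ is logical rather than geometrically local, together with verifying that the overlaps $\braket{\rho_0}{O_2\sigma_0}$ on the two terms of the commutator match so that the leading pieces cancel and only the clustering residual survives.
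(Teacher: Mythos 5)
Your outer skeleton matches the paper's proof: truncate $O_1(t)$ to a strictly local operator $O_1^l(t)$ via the Lieb--Robinson bound, handle the resulting local-operator commutator using the code structure, and balance the two errors with $l \approx v|t| + \xi \nu s$, which is exactly the paper's choice $l/\xi = \nu s + v|t|/\xi$. You also correctly identify that the $2^{-\nu s}$ factor must come from an exponent-$\nu$ (transfer-operator) decay rather than from the exponent-$\nu/2$ trace-distance correctability of Theorem \ref{thm:local_correctability}. However, your middle step contains a genuine gap, which you yourself flag as ``the crux'' but do not fill: you posit that $O_2$ can be \emph{cleaned} to an equivalent logical operator on $A^c$ with a residual error governed by the exponent-$\nu$ clustering quantity. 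No such cleaning lemma is available in this framework, and the natural route to one cannot give exponent $\nu$: constructing a cleaned operator (or any recovery-map-based substitute) requires passing through the decoupling--recovery duality of Theorem \ref{thm:decoupling}, whose conversion from correlation functions to recovery fidelity is expressed in the Bures distance and therefore carries a square root. Following that route yields an error of order $2^{-\nu s/2}$, which does not prove the theorem as stated; the proposal as written establishes at best this weaker bound.

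The paper closes this gap by a different mechanism that involves no recovery map and no cleaning. After truncation, both halves of the commutator are matrix elements of the local operator between purified code states ($\ket{\sigma_0'}=O_2\ket{\sigma_0}$ and $\ket{\rho_0'}=O_2^{\dagger}\ket{\rho_0}$ are again code states because $O_2$ is logical), so the renormalization identity applies: $\bra{\rho_0}O_1^l(t)\ket{\sigma_0'} = \bra{\rho_s}\Phi_0^s(O_1^l(t))\ket{\sigma_s'}$. Writing $\Phi_0^s = \Phi_r^s \circ \Phi_0^r$ with $r = \log_2 l + O(1)$ so that $O' = \Phi_0^r(O_1^l(t))$ lives in an elementary block, RG-regularity gives the spectral split $\Phi_r^s(O') = (\text{const})\,\1 + \sum_{k\neq 0}\lambda_k^{s-r}\tr[L_k O']\,R_k$. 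The identity component contributes $(\text{const})\bigl(\braket{\rho_s}{\sigma_s'} - \braket{\rho_s'}{\sigma_s}\bigr)$, which vanishes \emph{exactly}: the encoding isometries preserve inner products, so both overlaps equal $\bra{\rho_0}O_2\ket{\sigma_0}$. This is precisely the ``off-diagonal matching'' you worry about, and it is automatic once the leading piece is proportional to the identity, rather than something requiring care. The subleading eigenvalues then give $|\epsilon| \lesssim d^2\|O_1\|\,2^{-\nu(s-r)} \sim l^{\nu}2^{-\nu s}$, which with $l = v|t|+\xi\nu s$ is the claimed bound with exponent $\nu$ intact. If you want to salvage your approach, you would need to prove an exponent-$\nu$ cleaning lemma for approximate MERA codes; nothing in the paper (or in the cited AQEC literature it relies on) supplies this, whereas the spectral argument sidesteps the need for it entirely.
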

\begin{proof}
We consider the left hand side of Eq. (\ref{eqn:LRB})
\begin{equation}
\bra{\rho_0} [O_1(t),O_2] \ket{\sigma_0} = \bra{\rho_0}O_1(t)\ket{\sigma_0'} - \bra{\rho_0'}O_1(t)\ket{\sigma_0},\nonumber
\end{equation}
where $\ket{\sigma_0'} = O_2\ket{\sigma_0}$ and $\ket{\rho_0'}=O_2\ket{\rho_0}$ are states in $\cC_s$. From Eq. (\ref{eq:LR_original}) it follows that there exists an operator $O_1^l(t)$, supported on a set of sites with distance $l$ or less away from the support of $O_1$, such that \cite{Bravyi2006}
\begin{equation}
\|O_1(t) - O_1^l(t) \|\leq c \| O_1 \| \exp(\frac{l-v|t|}{\xi}).
\end{equation}

Therefore, both $|\bra{\rho_0} O_1(t) - O_1^l(t)\ket{\sigma_0'}|$ and $|\bra{\rho_0'} O_1(t) - O_1^l(t)\ket{\sigma_0}|$ are bounded by $c\| O_1\|  \exp(\frac{l-v|t|}{\xi})$. Further,  $\bra{\rho_0} O_1^l(t)\ket{\sigma_0'}=\bra{\rho_s} \Phi_0^s(O_1^l(t))\ket{\sigma_s'}$ and  $\bra{\rho_0'} O_1^l(t)\ket{\sigma_0'}=\bra{\rho_s'} \Phi_0^s(O_1^l(t))\ket{\sigma_s}$. Now, we can decompose the action of $\Phi_0^s$ into $\Phi_0^r$ and $\Phi_r^s$ so that $\Phi_0^r(O_1^l(t))$ is contained in an elementary block. Denoting this operator as $O'$,
\begin{align}
\bra{\rho_s} \Phi_r^s(O') \ket{\sigma_s'} - \bra{\rho_s'} \Phi_r^s(O') \ket{\sigma_s} =\delta \tr(O') +\epsilon,
\end{align}
where
\begin{equation}
	\delta = (\braket{\rho_s}{\sigma_s'} - \braket{\rho_s'}{\sigma_s})
\end{equation}
and
\begin{equation}
\epsilon = \sum_{k\neq 0} \lambda_k^{s-r} \tr[L_kO'] R_k.
\end{equation}
The first term vanishes because both $\braket{\rho_s}{\sigma_s'}$ and $\braket{\rho_s'}{\sigma_s}$ are equal to $\bra{\rho_0} O_2\ket{\sigma_0}$. The remaining term, $\epsilon$, is bounded by $2^{-\nu s}d^2\|O_1\|$. By choosing $\frac{l}{\xi} = \nu s + \frac{v|t|}{\xi}$, the bound is derived.

\end{proof}

It should be noted that the bound on $\bra{\rho_0}[O_1(t),O_2] \ket{\sigma_0}$ does not necessarily imply a bound on $\bra{\rho_0} [O_1,O_2(t)]\ket{\sigma_0}$. This is because the action of
the Hamiltonian may map a state in $\cC_s$ to a state outside of this subspace. However, this was to be expected, since we did not incorporate any relation between the code subspace and the
Hamiltonian. One solution is to consider the action of the commutator on states which are eigenstates of $H$. One physically reasonable choice would be the ground state of $H$. If the ground state
of $H$, $\ket{\psi}$, can be represented by a MERA such that the code subspace $\cC_s$ defined by this MERA is RG-regular, then
\begin{equation}
	\bra{\psi} [O_1(t),O_2]\ket{\psi} \leq c'(v|t|+\xi \nu s)^{\nu} 2^{-\nu s}
\end{equation}
because $\ket{\psi}$ by definition is in $\cC_s$. Furthermore, by construction $\bra{\psi}[O_1(t),O_2]\ket{\psi}=\bra{\psi}[O_1,O_2(t)]\ket{\psi}$.

At this point, we have only considered a Lieb-Robinson bound between an observable in the bulk, and another observable on the boundary inside the future light cone of the first. A more complete geometrical description of the Lieb-Robinson bounds for two observables anywhere in the bulk would be desirable. This is left for future work.

\section{Conclusion\label{section:conclusion}}
In this paper, we have outlined a mechanism by which certain toy models of holography can be derived from  generic properties of states at criticality. It is straightforward to see that the main findings of this paper, e.g., Theorem \ref{thm:local_correctability} and its implications follow in higher dimensions as well. This is because the derivation was based on a very general property of entanglement renormalization. This provides a partial explanation for the origin of these codes, assuming that the low energy states of conformal field theory can be well-described by MERA. It also explains how a causal dynamics can arise in these systems, despite the fact that the effective Hamiltonian in the bulk is not manifestly local.

However, many important issues remain. For one thing, it will be interesting to understand the entanglement wedge reconstruction\cite{Almheiri2014} in our framework. The approximate nature of our bound makes this analysis challenging.
It is also important to note that our bound is not strong enough to ensure locality at sub-AdS scale. This was to be expected because our bound does not incorporate the properties that are expected to be satisfied by conformal field theories with a semi-classical gravitational dual: that there is a gap in the scaling dimension of the operators.\cite{Heemskerk2009} For both of these issues, a framework that can organize operators in terms of their scaling dimensions is desirable. One possibility would be the operator algebra quantum error correction, as was suggested in Ref. \cite{Almheiri2014}, or an approximate version thereof. An analogous analysis would require a derivation of Theorem \ref{thm:decoupling} for general operator algebra, which may be of an independent interest.

In order to gain a more refined insight, it will be important to study families of circuits that are equipped with more refined set of structures. There are many interesting questions in this direction.
Would a random tensor network of Ref.\cite{Hayden2016} emerge from the random MERA network in Ref.\cite{Swingle2012a}? Can we import the constraints posed on the operators of the CFT into the language
of quantum error correction? Would the tradeoff bounds on quantum error correction lead to nontrivial constraints on gravity? These are left for future work.

\subsection*{Acknowledgments}
We especially thank Fernando Pastawski for helpful discussions. MJK was supported by the Villum foundation. IK would like to acknowledge the hospitality of Perimeter Institute and Simons center for geometry and physics, where part of this work was completed.

\bibliographystyle{JHEP}
\bibliography{bib2}

\providecommand{\href}[2]{#2}\begingroup\raggedright\begin{thebibliography}{10}

\bibitem{Maldacena1997}
J.~M. Maldacena, \emph{The large n limit of superconformal field theories and
  supergravity},
  \href{http://dx.doi.org/10.1023/A:1026654312961}{\emph{Adv.Theor.Math.Phys.}
  {\bfseries 2} (1998) 231--252}.

\bibitem{Witten1998}
E.~Witten, \emph{Anti de sitter space and holography},
  {\emph{Adv.Theor.Math.Phys.} {\bfseries 2} (1998) 253--291}.

\bibitem{Aharony1999}
O.~Aharony, S.~S. Gubser, J.~Maldacena, H.~Ooguri and Y.~Oz, \emph{Large n
  field theories, string theory and gravity},
  \href{http://dx.doi.org/10.1016/S0370-1573(99)00083-6}{\emph{Phys.Rept.}
  {\bfseries 323} (2000) 183--386}.

\bibitem{Almheiri2014}
A.~Almheiri, X.~Dong and D.~Harlow, \emph{Bulk locality and quantum error
  correction in ads/cft},
  \href{http://dx.doi.org/10.1007/JHEP04(2015)163}{\emph{JHEP} {\bfseries 1504}
  (2015) 163}.

\bibitem{Raamsdonk2010}
M.~V. Raamsdonk, \emph{Building up spacetime with quantum entanglement},
  \href{http://dx.doi.org/10.1007/s10714-010-1034-0
  10.1142/S0218271810018529}{\emph{Int.J.Mod.Phys.D} {\bfseries 19} (2010)
  2429--2435}.

\bibitem{Pastawski2015}
F.~Pastawski, B.~Yoshida, D.~Harlow and J.~Preskill, \emph{Holographic quantum
  error-correcting codes: Toy models for the bulk/boundary correspondence},
  \href{http://dx.doi.org/10.1007/JHEP06(2015)149}{\emph{JHEP} {\bfseries 06}
  (2015) 149}.

\bibitem{Yang2015}
Z.~Yang, P.~Hayden and X.-L. Qi, \emph{Bidirectional holographic codes and
  sub-ads locality},
  \href{http://dx.doi.org/10.1007/JHEP01(2016)175}{\emph{JHEP} {\bfseries 2016}
  (2016) 175}.

\bibitem{Hayden2016}
P.~Hayden, S.~Nezami, X.-L. Qi, N.~Thomas, M.~Walter and Z.~Yang,
  \emph{Holographic duality from random tensor networks},
  \href{http://dx.doi.org/10.1007/JHEP11(2016)009}{\emph{JHEP} {\bfseries 11}
  (2016) 009}.

\bibitem{Donnelly2016}
W.~Donnelly, B.~Michel, D.~Marolf and J.~Wien, \emph{Living on the edge: A toy
  model for holographic reconstruction of algebras with centers},
  \href{https://arxiv.org/abs/1611.05841v1}{{\ttfamily 1611.05841v1}}.

\bibitem{Ryu2006}
S.~Ryu and T.~Takayanagi, \emph{Holographic derivation of entanglement entropy
  from ads/cft},
  \href{http://dx.doi.org/10.1103/PhysRevLett.96.181602}{\emph{Phys. Rev.
  Lett.} {\bfseries 96} (2006) 181602}.

\bibitem{Vidal2007}
G.~Vidal, \emph{Entanglement renormalization},
  \href{http://dx.doi.org/10.1103/PhysRevLett.99.220405}{\emph{Phys. Rev.
  Lett.} {\bfseries 99} (Nov, 2007) 220405}.

\bibitem{Pfeifer2009}
R.~N.~C. Pfeifer, G.~Evenbly and G.~Vidal, \emph{Entanglement renormalization,
  scale invariance, and quantum criticality},
  \href{http://dx.doi.org/10.1103/PhysRevA.79.040301}{\emph{Phys. Rev. A}
  {\bfseries 79} (Apr, 2009) 040301}.

\bibitem{Lieb1972}
E.~H. Lieb and D.~W. Robinson, \emph{The finite group velocity of quantum spin
  systems}, {\emph{Comm. Math. Phys.} {\bfseries 28} (1972) 251--257}.

\bibitem{Swingle2012}
B.~Swingle, \emph{Entanglement renormalization and holography},
  \href{http://dx.doi.org/10.1103/PhysRevD.86.065007}{\emph{Phys. Rev. D}
  {\bfseries 86} (Sep, 2012) 065007}.

\bibitem{Swingle2012a}
B.~Swingle, \emph{Constructing holographic spacetimes using entanglement
  renormalization},  \href{https://arxiv.org/abs/1209.3304v1}{{\ttfamily
  1209.3304v1}}.

\bibitem{Qi2013}
X.-L. Qi, \emph{Exact holographic mapping and emergent space-time geometry},
  \href{https://arxiv.org/abs/1309.6282v1}{{\ttfamily 1309.6282v1}}.

\bibitem{Miyaji2015}
M.~Miyaji, T.~Numasawa, N.~Shiba, T.~Takayanagi and K.~Watanabe,
  \emph{Continuous multiscale entanglement renormalization ansatz as
  holographic surface-state correspondence},
  \href{http://dx.doi.org/10.1103/PhysRevLett.115.171602}{\emph{Phys. Rev.
  Lett.} {\bfseries 115} (Oct, 2015) 171602}.

\bibitem{Castro2012}
A.~Castro, M.~R. Gaberdiel, T.~Hartman, A.~Maloney and R.~Volpato,
  \emph{Gravity dual of the ising model},
  \href{http://dx.doi.org/10.1103/PhysRevD.85.024032}{\emph{Phys. Rev. D}
  {\bfseries 85} (Jan, 2012) 024032}.

\bibitem{Ferris2014}
A.~J. Ferris and D.~Poulin, \emph{Tensor networks and quantum error
  correction},
  \href{http://dx.doi.org/10.1103/PhysRevLett.113.030501}{\emph{Phys. Rev.
  Lett.} {\bfseries 113} (Jul, 2014) 030501}.

\bibitem{Pastawski2016}
F.~Pastawski, J.~Eisert and H.~Wilming, \emph{Quantum source-channel codes},
  \href{https://arxiv.org/abs/1611.07528v1}{{\ttfamily 1611.07528v1}}.

\bibitem{Kitaev1997}
A.~Y. Kitaev, \emph{Fault-tolerant quantum computation by anyons},
  \href{http://dx.doi.org/10.1016/S0003-4916(02)00018-0}{\emph{Annals Phys.}
  {\bfseries 303} (2003) 2--30}.

\bibitem{Flammia2016}
S.~T. Flammia, J.~Haah, M.~J. Kastoryano and I.~H. Kim, \emph{Limits on the
  storage of quantum information in a volume of space},
  \href{https://arxiv.org/abs/1610.06169v1}{{\ttfamily 1610.06169v1}}.

\bibitem{Evenbly2007}
G.~Evenbly and G.~Vidal, \emph{Algorithms for entanglement renormalization},
  \href{http://dx.doi.org/10.1103/PhysRevB.79.144108}{\emph{Phys. Rev. B}
  {\bfseries 79} (2007) 144108}.

\bibitem{Giovannetti2008}
V.~Giovannetti, S.~Montangero and R.~Fazio, \emph{Quantum mera channels},
  \href{http://dx.doi.org/10.1103/PhysRevLett.101.180503}{\emph{Phys. Rev.
  Lett.} {\bfseries 101} (2008) 180503},
  [\href{https://arxiv.org/abs/0804.0520v3}{{\ttfamily 0804.0520v3}}].

\bibitem{Kretschmann2008}
D.~Kretschmann, D.~Schlingemann and R.~F. Werner, \emph{The
  information-disturbance tradeoff and the continuity of stinespring's
  representation}, \href{http://dx.doi.org/10.1109/TIT.2008.917696}{\emph{IEEE
  Transactions on Information Theory} {\bfseries 54} (April, 2008) 1708--1717}.

\bibitem{Beny2010}
C.~B\'eny and O.~Oreshkov, \emph{General conditions for approximate quantum
  error correction and near-optimal recovery channels},
  \href{http://dx.doi.org/10.1103/PhysRevLett.104.120501}{\emph{Phys. Rev.
  Lett.} {\bfseries 104} (Mar, 2010) 120501}.

\bibitem{Vidal2008}
G.~Vidal, \emph{Class of quantum many-body states that can be efficiently
  simulated},
  \href{http://dx.doi.org/10.1103/PhysRevLett.101.110501}{\emph{Phys. Rev.
  Lett.} {\bfseries 101} (Sep, 2008) 110501}.

\bibitem{Evenbly2009}
G.~Evenbly and G.~Vidal, \emph{Entanglement renormalization in two spatial
  dimensions},
  \href{http://dx.doi.org/10.1103/PhysRevLett.102.180406}{\emph{Phys. Rev.
  Lett.} {\bfseries 102} (May, 2009) 180406}.

\bibitem{Pastawski2016a}
F.~Pastawski and J.~Preskill, \emph{Code properties from holographic
  geometries},  \href{https://arxiv.org/abs/1612.00017v1}{{\ttfamily
  1612.00017v1}}.

\bibitem{Bravyi2008}
S.~Bravyi and B.~Terhal, \emph{A no-go theorem for a two-dimensional
  self-correcting quantum memory based on stabilizer codes},
  \href{http://dx.doi.org/10.1088/1367-2630/11/4/043029}{\emph{New J. Phys.}
  {\bfseries 11} (2009) 043029}.

\bibitem{Bravyi2010}
S.~Bravyi, D.~Poulin and B.~Terhal, \emph{Tradeoffs for reliable quantum
  information storage in 2d systems},
  \href{http://dx.doi.org/10.1103/PhysRevLett.104.050503}{\emph{Phys. Rev.
  Lett.} {\bfseries 104} (Feb, 2010) 050503}.

\bibitem{Bravyi2006}
S.~Bravyi, M.~B. Hastings and F.~Verstraete, \emph{Lieb-robinson bounds and the
  generation of correlations and topological quantum order},
  \href{http://dx.doi.org/10.1103/PhysRevLett.97.050401}{\emph{Phys. Rev.
  Lett.} {\bfseries 97} (Jul, 2006) 050401}.

\bibitem{Heemskerk2009}
I.~Heemskerk, J.~Penedones, J.~Polchinski and J.~Sully, \emph{Holography from
  conformal field theory},
  \href{http://dx.doi.org/10.1088/1126-6708/2009/10/079}{\emph{JHEP} {\bfseries
  2009} (2009) 079}.

\bibitem{Nielsen2011}
M.~A. Nielsen and I.~L. Chuang, \emph{Quantum Computation and Quantum
  Information: 10th Anniversary Edition}.
\newblock Cambridge University Press, New York, NY, USA, 10th~ed., 2011.

\bibitem{Wolf2012}
M.~M. Wolf, \emph{Quantum channels \& operations: Guided tour.},
  {\emph{http://www-m5.ma.tum.de/foswiki/pub/M5/Allgemeines/MichaelWolf/QChannelLecture.pdf}
  (2012) }.

\end{thebibliography}\endgroup

\section{Appendix}

Quantum channels are one of the most important tools in quantum information theory. They describe in the most general manner possible the evolution of a quantum system; meaning that they take as input a quantum state, and the output another quantum state. Quantum channels are completely positive and trace preserving operations. They have several useful representations, perhaps the most commonly used one is the Kraus form: $T(\cdot)=\sum_k E_k (\cdot)E_k^\dag$ for some set of ``Kraus operators" $\{E_k\}$ satisfying $\sum_k E_k^\dag E_k=\1$.  They can be represented as linear operators on Hilbert space as: $\hat{T}:\cH^2_a\rightarrow\cH^2_b$ for any general quantum channel $T:\cB(\cH_a)\rightarrow \cB(\cH_b)$, which in terms of Kraus operators reads:  $\hat{T}=\sum_k E_k\otimes \bar{E}_k$, where $\bar{E}$ is the complex conjugate of $E$.

In general, $\hat{T}$ is not a normal operator, meaning that it typically has Jordan blocks. In the special case when $\hat{T}$ is \textit{non-defective} (i.e. has no non-trivial Jordan blocks), then it has a spectral decomposition
\begin{equation}\label{eqn:channel2}
\hat{T}=\sum_j \lambda_j \ket{L_k}\bra{R_k}.
\end{equation}
If the input and output dimensions of the channel are the same then $\langle R_k | L_j\rangle =\delta_{jk}$ is a bi-orthonormal basis. The spectrum is bounded by one ($\lambda_j\leq 1$), and the channel always has at least one eigenvalue equal to one. If there is no other eigenvalue of magnitude one, then channel is called \textit{mixing}. In terms of CPTP maps, Eq. (\ref{eqn:channel}) reads
\begin{equation}
T(A)= \sum_k \lambda_k \tr[ A R_k]L_k.
\end{equation}

For more on quantum channels, see Refs. \cite{Nielsen2011,Wolf2012}


\end{document}